\DeclareMathOperator*{\E}{\mathbb{E}}
\DeclareMathOperator*{\Per}{\text{Per}}
\newcommand{\cor}[1]{{\color{black}{#1}}}
\theoremstyle{plain}
\newtheorem{theorem}{Theorem}
\newtheorem{corollary}{Corollary}
\newtheorem{lemma}{Lemma}
\theoremstyle{definition}
\newtheorem{problem}{Problem}
\begin{document}
	
\definecolor{red}{RGB}{255,0,0}
\preprint{APS/123-QED}


\title{Quantum computational advantage of noisy boson sampling with \\ partially distinguishable photons}
\author{Byeongseon Go}
\affiliation{IRC NextQuantum, Department of Physics and Astronomy, Seoul National University, Seoul 08826, Republic of Korea}
\author{Changhun Oh}
\email{changhun0218@gmail.com}
\affiliation{Department of Physics, Korea Advanced Institute of Science and Technology, Daejeon 34141, Republic of Korea}
\author{Hyunseok Jeong}
\email{h.jeong37@gmail.com}
\affiliation{IRC NextQuantum, Department of Physics and Astronomy, Seoul National University, Seoul 08826, Republic of Korea}

\begin{abstract}
Boson sampling stands out as a promising approach toward experimental demonstration of quantum computational advantage. 
However, the presence of physical noise in near-term experiments hinders the realization of the quantum computational advantage with boson sampling. 
Since physical noise in near-term boson sampling devices is inevitable, precise characterization of the boundary of noise rates where the classical intractability of boson sampling is maintained is crucial for quantum computational advantage using near-term devices.
In this work, we identify the level of partial distinguishability noise that upholds the classical intractability of boson sampling.
We find that boson sampling with on average $O(\log N)$ number of distinguishable photons out of $N$ input photons maintains the equivalent complexity to the ideal boson sampling case. 
By providing strong complexity theoretical evidence for the classical intractability of noisy boson sampling, we expect that our findings will ultimately facilitate the demonstration of quantum computational advantage with noisy boson sampling experiments in the near future.
    
\end{abstract}

\maketitle

\section{Introduction}
The development of noisy intermediate-scale quantum devices~\cite{preskill2018quantum} in recent years has raised expectations for the experimental realization of quantum computational advantage. 
Various computational tasks have been proposed to achieve a quantum computational advantage, i.e., hard to solve with classical computers but efficiently solvable with quantum computers~\cite{feynman2018simulating, lloyd1996universal, shor1999polynomial, bouland2019complexity, aaronson2011computational}. 
Among these tasks, boson sampling~\cite{aaronson2011computational, hamilton2017gaussian, deshpande2022quantum, grier2022complexity} has become
one of the most promising candidates for experimental demonstration of quantum computational advantage due to its experimental feasibility and strong complexity-theoretical evidence of its hardness.
In fact, we have now seen several boson sampling experiments~\cite{zhong2020quantum, zhong2021phase, madsen2022quantum, deng2023gaussian} whose system size is sufficiently large to claim a quantum computational advantage. 

However, despite the significant progress in boson sampling experiments, the degrees of physical noises, such as photon loss and partial distinguishability of photons, in current experimental setups are still large.  
These noises can hinder the experimental demonstration of quantum computational advantage with boson sampling,
because such physical noises can considerably reduce its computational complexity.
More specifically, numerous classical algorithms have been developed to simulate boson sampling subject to physical noises, such as photon loss and partial distinguishability of photons~\cite{kalai2014gaussian, renema2018classical, renema2018efficient, moylett2019classically, shchesnovich2019noise, renema2020simulability, shi2022effect, van2024efficient, oszmaniec2018classical, garcia2019simulating, qi2020regimes, brod2020classical, villalonga2021efficient, bulmer2022boundary, oh2024classicalalgorithm, oh2023classical, oh2024classical, oh2025recent}. 
These classical algorithms can efficiently simulate the noisy boson sampling if the noise level is sufficiently large, possibly ruling out the quantum computational advantage of boson sampling when the noise rate quickly increases with the system size.

Since physical noises are unavoidable in near-term quantum devices and often significantly reduce the computational complexity of the systems, to achieve a quantum computational advantage using near-term quantum devices, it is crucial to identify the boundary of the regime that is classically intractible under the effect of noise, i.e., how large amount of noise is tolerated for the hardness result. 
However, the current theoretical hardness results of boson sampling (as in Refs.~\cite{aaronson2011computational, hamilton2017gaussian, deshpande2022quantum, grier2022complexity, bouland2022noise, bouland2023complexity, bouland2024average}) are generally for the ideal case without any physically realistic noise; thus, these results may not provide the classical hardness evidence of boson sampling when the physical types of noise are applied.

To resolve this issue, it is essential to find the hardness evidence of noisy boson sampling itself.
One promising approach proposed by Ref.~\cite{aaronson2016bosonsampling} is to verify that classical simulation of noisy boson sampling is \textit{as hard as} classical simulation of ideal boson sampling. 
This can be done by showing that the average-case hardness of the ideal boson sampling, a key ingredient for the classical hardness of boson sampling~\cite{aaronson2011computational, hamilton2017gaussian, deshpande2022quantum, grier2022complexity, bouland2022noise, bouland2023complexity, bouland2024average}, can carry over to the average-case hardness of the noisy boson sampling with its noise rate below a certain threshold. 
Then, such noisy boson sampling is equivalently hard to classically simulate as the ideal boson sampling because the average-case hardness of the ideal boson sampling leads to the average-case hardness of the noisy boson sampling.
Ref.~\cite{aaronson2016bosonsampling} has conducted this analysis for the photon loss, which is a dominant noise source in current optical setups.  
More specifically, Ref.~\cite{aaronson2016bosonsampling} has shown that for $N$ single-photon input, noisy boson sampling where at most a fixed $k = O(1)$ number of photon loss (i.e., $N-k$ photons survive) maintains the same complexity as the ideal boson sampling case. 
Since the loss rate in current experiments is obviously much larger than this regime, increasing the noise threshold above the $k = O(1)$ number of photon loss in Ref.~\cite{aaronson2016bosonsampling} remains an important open problem.

Meanwhile, current boson sampling experiments are susceptible to another type of physical noise, partial distinguishability of photons.  
It is one of the major noise sources that hinder the quantum computational advantage of boson sampling~\cite{renema2018efficient, renema2018classical, moylett2019classically, shchesnovich2019noise, renema2020simulability, shi2022effect, van2024efficient}.
More specifically, the indistinguishable nature of photons is key to the computational hardness of boson sampling, and the computational complexity of boson sampling reduces as photons become more distinguishable from each other. 
Also, partial distinguishability is a noise type that appears not only in optical systems~\cite{zhong2020quantum, zhong2021phase, madsen2022quantum, deng2023gaussian} but also across broader bosonic systems like atomic arrays and ion traps; those systems recently served as platforms for boson sampling experiments~\cite{chen2023scalable, young2024atomic} and are expected to be promising candidates to experimentally demonstrate quantum computational advantage.
Therefore, to achieve quantum computational advantage with near-term boson sampling experiments, it is crucial to investigate the classical intractability of noisy boson sampling subject to partial distinguishability noise. 
Here, as we have previously outlined, this can be accomplished by identifying the threshold of the partial distinguishability noise that maintains the classical intractability of the ideal boson sampling.

In this work, we show that noisy $N$-single photon boson sampling with $O(\log N)$ distinguishable photons, on average, still maintains the classical intractability of the ideal boson sampling.
Specifically, we show that the average-case estimation of output probabilities of noisy boson sampling with $O(\log N)$ distinguishable photons is complexity-theoretically equivalent to the ideal boson sampling case.
This indicates that boson sampling with the logarithmic number of distinguishable photons is classically hard under the conjecture that the ideal boson sampling is classically hard. 
We also generalize our result to noisy boson sampling when both photon loss and partial distinguishability are applied and find that at most $O(\log N)$ number of photon loss and $O(\log N)$ number of distinguishable photons still upholds the classical intractability of the ideal boson sampling.

We expect that our findings can help understand the conditions of classical intractability for noisy boson sampling, which has been considerably underexplored compared to the numerous classical simulability arguments of noisy boson sampling~\cite{renema2018classical, renema2018efficient, moylett2019classically, renema2020simulability, shi2022effect, van2024efficient, shchesnovich2019noise, oszmaniec2018classical, garcia2019simulating, qi2020regimes, brod2020classical, villalonga2021efficient, bulmer2022boundary, oh2023classical, oh2024classicalalgorithm, oh2024classical}. 
Further, by providing strong complexity-theoretical evidence of classical simulation hardness of noisy boson sampling, we expect that our results will enable one to fully demonstrate quantum computational advantage with noisy boson sampling experiments in the near future. 

Our paper is organized as follows.
\cor{In Sec.~\ref{Section: related works}, we give a brief overview of the related results to our work.}
In Sec.~\ref{section: output probability}, we establish the noise model for partial distinguishability of photons and introduce the output probability of noisy boson sampling with partial distinguishability. 
In Sec.~\ref{section:problemset}, we formally define our main problem, the average-case estimation of output probabilities of noisy boson sampling with partial distinguishability, and introduce our main result.
In Sec.~\ref{Section:reduction}, we prove that the average-case estimation problem of noisy boson sampling with partial distinguishability is equivalently hard to the ideal boson sampling case.
In Sec.~\ref{section: generalization}, we generalize our main result to the classical intractability of noisy boson sampling when both photon loss and partial distinguishability noise are applied. 
Finally, in Sec.~\ref{section: conclusion}, we provide some concluding remarks and directions for future work.

\cor{
\section{Related works}\label{Section: related works}

Before proceeding, we briefly review related results, and clarify our contributions within the context of the existing literature.

We first go through the existing classical simulability results for partially distinguishable boson sampling.
Reference~\cite{renema2018efficient} presents an efficient approximate classical algorithm for partially distinguishable boson sampling, and finds that it becomes more accurate as the distinguishability of photons increases.
References~\cite{renema2018classical, moylett2019classically} establish the combined noise model for photon loss and partial distinguishability, and similarly propose a classical algorithm to approximately simulate that noisy boson sampling. 
Reference~\cite{shchesnovich2019noise} modifies the Gaussian noise proposed in Ref.~\cite{kalai2014gaussian} to realistic partial distinguishability noise, and suggests a noise threshold for efficient classical simulability of that noisy boson sampling. 
Indeed, these results provide a noise threshold for the classical simulability of partially distinguishable boson sampling, providing a necessary condition on the noise rate to achieve the quantum computational advantage. 
On the other hand, our work complements these analyses by providing a noise threshold for the classical intractability of partially distinguishable boson sampling, establishing a sufficient condition on the noise rate required to achieve quantum computational advantage.

Meanwhile, for the classical hardness result, Ref.~\cite{shchesnovich2014sufficient} shows that partially distinguishable boson sampling whose single-photon mode mismatch is given by $O(N^{-3/2})$ on $N$ input photons, corresponding to $O(N^{-1/2})$ number of distinguishable photons on average, is complexity-theoretically equivalent to the ideal boson sampling.
Further, Ref.~\cite{aaronson2016bosonsampling} establishes complexity-theoretic arguments on lossy boson sampling, deriving that a fixed $O(1)$ number of lost photons on input $N$ photons maintains the classical intractability of ideal boson sampling.
One limitation of these earlier results is that the number of tolerable noisy photons for the classical hardness does not \textit{scale} with the system size.
Unlike these previous results, we show that partially distinguishable boson sampling with $O(\log N)$ number of distinguishable photons on average still maintains its classical hardness. 
Hence, our work significantly improves the noise robustness for the classical hardness by allowing the number of tolerable noisy photons to scale with the system size, thereby paving the way for realizing quantum computational advantage with noisy boson sampling experiments.

For further related works and details, we refer the readers to Appendix~\ref{appendix: section: related works}, which provides a detailed overview of the existing literature on the complexity of noisy boson sampling.

}

\section{Boson sampling with partial distinguishability}\label{section: output probability}

In this section, we introduce the output probability of the noisy boson sampling subject to partial distinguishability noise.
We first introduce the output probability of the ideal boson sampling and later describe how it varies when partial distinguishability noise is introduced. 

We consider the standard Fock-state boson sampling scheme in Ref.~\cite{aaronson2011computational} with the following setup:
We prepare $N$ single-photon states and inject them into the first $N$ modes over an $M$-mode linear optical network, characterized by an $M$ by $M$ Haar-random unitary matrix $U$.
Here, $M$ is polynomially related to $N$ as $M = O(N^{\gamma})$ for a large constant $\gamma$, as depicted in Ref.~\cite{aaronson2011computational}. 
After the unitary evolution, we measure the output photon number for each mode.
Here, the measurement outcome can be represented as an $N$-dimensional integer vector $S = (s_1,\dots,s_N)$ with $1\leq s_i\leq M$ and $s_1\le s_2\le\cdots\le s_N$, where each element represents the output mode in which a photon is detected.
Then, the output probability $p_{S}(U)$ to obtain the outcome $S$ can be expressed as~\cite{aaronson2011computational} 
\begin{equation}
    p_{S}(U) = \frac{1}{\mu(S)}|\Per(U_{S,1})|^2,
\end{equation}
where $\mu(S)$ denotes a product of the multiplicity of all possible values in $S$ (e.g., $\mu(S) = 1$ for collision-free outcome $S$), and $U_{S,1}$ is an $N$ by $N$ matrix defined by taking rows of $U$ according to $S$, and taking the first $N$ columns of $U$.

By the hiding property shown in Ref.~\cite{aaronson2011computational}, given $M$ by $M$ Haar-random unitary matrix $U$ with a sufficiently large $\gamma$, $U_{S,1}$ is close to $X/\sqrt{M}$ for a complex i.i.d. Gaussian random matrix $X \sim \mathcal{N}(0,1)_{\mathbb{C}}^{N\times N}$, for any collision-free outcome $S$.
Therefore, for Haar-random unitary matrix $U$ and for any collision-free outcome $S$, the output probability $p_{S}(U)$ can also be represented as $q(X)/M^{N}$ for $X \sim \mathcal{N}(0,1)_{\mathbb{C}}^{N\times N}$, where
\begin{equation}\label{idealoutputprobability}
    q(X) = |\Per(X)|^2 = \sum_{\sigma,\rho\in S_N}\prod_{i=1}^{N}X_{\sigma(i),i}X_{\rho(i),i}^* ,
\end{equation}
is the rescaled version of the output probability $p_{S}(U)$ of the ideal boson sampling multiplied by $M^N$.


\cor{
We now describe how $q(X)$ in Eq.~\eqref{idealoutputprobability} changes when partial distinguishability noise is applied.
For the partial distinguishability noise, we employ the noise model of \textit{mutual indistinguishability} between photon pairs, depicted in Refs.~\cite{tichy2015sampling, renema2018efficient, moylett2019classically}. 
More precisely, the mutual indistinguishability for $N$ photons can be quantified by an $N$ by $N$ Gram matrix $S_{ij} = \braket{\Psi_{i}|\Psi_{j}}$ for $\Psi_{i}$ being the $i$-th single photon wave function.
We note that this mutual indistinguishability between photon pairs (i.e., the elements of the Gram matrix) can be experimentally inferred via Hong-Ou-Mandel interference~\cite{hong1987measurement}, as demonstrated in the recent boson sampling experiment~\cite{young2024atomic}.
Specifically, this can be done by experimentally obtaining the coincidence probability of a pair of photons and comparing it with that of a perfectly distinguishable pair of photons~\cite{young2024atomic}.

Throughout this work, similarly to Refs.~\cite{tichy2015sampling, renema2018efficient, moylett2019classically}, we consider the simplified Gram matrix $S_{ij} = x + (1-x)\delta_{ij}$ for $x \in [0,1]$, which represents the \textit{uniform} mutual indistinguishability $x$ among all photon pairs.
Here and throughout, we refer to $x$ as the indistinguishability rate of boson sampling.}
Using this noise model, photons become more distinguishable as $x$ decreases;
$x=1$ corresponds to the fully indistinguishable case, $0<x<1$ corresponds to a partially distinguishable case, and $x=0$ corresponds to the fully distinguishable case.
After the indistinguishability $x$ is introduced, the (rescaled) noisy output probability $q(x, X)$ transforms from $q(X)$ as~\cite{tichy2015sampling, renema2018efficient}
\begin{equation}\label{outputprobabilitywhendistinguishabilityapplied}
    q(x, X) = \sum_{\sigma,\rho\in S_N}x^{N-(\sigma\cdot\rho)} \prod_{i=1}^{N}X_{\sigma(i),i}X_{\rho(i),i}^*,
\end{equation}
where $(\sigma\cdot\rho)$ is the number of $i$'s such that $\sigma(i) = \rho(i)$ for all $i \in [N]$.
One can readily check the two extremal cases: 
For $x = 1$, $q(x, X) = |\Per(X)|^2$, which reproduces a (rescaled) output probability of the ideal boson sampling and known to be \#P-hard to calculate to within a multiplicative error~\cite{aaronson2011computational}.
On the other hand, for $x=0$, $q(x, X) = \Per(|X|^2)$, which can be calculated to within a multiplicative error in polynomial time due to the nonnegativity of the matrix elements~\cite{jerrum2004polynomial}. 
Clearly, this clarifies a tendency that the lower the indistinguishability rate $x$ (i.e., photons become more distinguishable from each other), the easier it is to compute the output probability.

More importantly, as proposed in Ref.~\cite{moylett2019classically}, the indistinguishability rate $x$ in Eq.~\eqref{outputprobabilitywhendistinguishabilityapplied} can also be interpreted as the probability that an individual photon remains indistinguishable (in other words, each photon becomes distinguishable with probability $1 - x$).
Accordingly, the output probability $q(x, X)$ in Eq.~\eqref{outputprobabilitywhendistinguishabilityapplied} can also be represented by a linear sum of output probabilities for all possible indistinguishable photon number $j \in \{0,1,\dots,N\}$, where their coefficients are given by the binomial distribution of $N$ and $x$ as 
\begin{align}\label{noisyoutputprobability}
\begin{split}
    &q(x, X) \\
    &= \sum_{j=0}^{N}x^{j}(1-x)^{N-j}  \sum_{\substack{I \subseteq [N]\\|I|=j}} \sum_{\substack{J \subseteq [N]\\|J|=j}} |\Per(X_{I,J})|^2 \Per(|X_{\bar{I},\bar{J}}|^2)  \\
    &=  \sum_{j=0}^{N}\binom{N}{j}x^{j}(1-x)^{N-j}q_{j}(X),
\end{split}
\end{align}
where $I$ and $J$ represent all the possible subsets of $[N]$ with their size $j$, and $\bar{I}$ and $\bar{J}$ represent $[N]\setminus I$ and $[N]\setminus J$ each with their size $N - j$.
Here, $X_{I,J}$ is a matrix defined by taking rows and columns of $X$ according to $I$ and $J$ (also similarly for $X_{\bar{I},\bar{J}}$), and 
\begin{align}\label{fixedpdoutputprobability}
    q_{j}(X) = \binom{N}{j}^{-1}\sum_{\substack{I \subseteq [N]\\|I|=j}} \sum_{\substack{J \subseteq [N]\\|J|=j}} |\Per(X_{I,J})|^2 \Per(|X_{\bar{I},\bar{J}}|^2)
\end{align}
denotes the (rescaled) output probability of boson sampling with a fixed number $j \in \{0,1,\dots,N\}$ of indistinguishable photons over $N$ photons. 
For example, $j = N$ corresponds to the output probability with fully indistinguishable photons, and correspondingly, $q_{N}(X) = |\Per (X)|^2$ is the ideal output probability of boson sampling.
Hence, the noisy output probability $q(x, X)$ in Eq.~\eqref{noisyoutputprobability} $contains$ the ideal output probability $q_{N}(X)$ of boson sampling in the summation. 
This later becomes a crucial property for showing the classical intractability of partially distinguishable boson sampling.

\cor{

\section{Our main problem and result}\label{section:problemset}

}

In this section, we introduce our main problem for the classical simulation hardness of partially distinguishable boson sampling, which is a modified version of the main problem addressed in the original boson sampling proposal~\cite{aaronson2011computational} (which we referred to as ideal boson sampling). 
More specifically, proving the classical intractability of the ideal boson sampling relies on \#P-hardness of average-case estimation of the (rescaled) output probability $q(X)$ in Eq.~\eqref{idealoutputprobability} over $X \sim \mathcal{N}(0,1)_{\mathbb{C}}^{N\times N}$, which can be formally represented as follows:

\begin{problem}[$|\text{GPE}|_{\pm}^{2}$~\cite{aaronson2011computational}]
Given as input a matrix $X \sim \mathcal{N}(0,1)_{\mathbb{C}}^{N\times N}$ of i.i.d. Gaussians, together with error bounds $\epsilon_0,\,\delta_0 > 0$, estimate $q(X)$ to within additive error $\pm \epsilon_0\cdot N!$ with probability at least $1-\delta_0$ over $X$ in $\text{poly}(N,\epsilon_0^{-1},\delta_0^{-1})$ time. 
\end{problem}

By Ref.~\cite{aaronson2011computational}, the above $|\text{GPE}|_{\pm}^{2}$ problem is \#P-hard under plausible conjectures. 
Assuming that $|\text{GPE}|_{\pm}^{2}$ is \#P-hard, one can show that simulating the ideal boson sampling within an inverse-polynomial total variation distance error is classically intractable.
More specifically, suppose there exists a randomized classical sampler $\mathcal{S}_{0}$ that simulates the ideal boson sampling: on input a circuit unitary matrix $U$, samples the outcome from the output distribution of boson sampling within total variation distance $\beta$. 
Given that the input $U$ is an $M$ by $M$ Haar-random unitary matrix, the (rescaled) output probability of boson sampling can be represented as $q(X)$ for $X \sim \mathcal{N}(0,1)_{\mathbb{C}}^{N\times N}$.
Then, by Stockmeyer's reduction~\cite{stockmeyer1985approximation}, $|\text{GPE}|_{\pm}^{2}$ can be solved for $\beta = O(\epsilon_0\delta_0)$, and in this case $|\text{GPE}|_{\pm}^{2} \in \rm{BPP}^{NP^{\mathcal{S}_{0}}}$ (see Ref.~\cite{aaronson2011computational} for more details). 
This implies that if an efficient algorithm for $\mathcal{S}_{0}$ with $\beta = O(\epsilon_0\delta_0)$ exists, $|\text{GPE}|_{\pm}^{2}$ can be solved in $\text{BPP}^{\text{NP}}$, then the polynomial hierarchy collapses to the level $\text{BPP}^{\text{NP}}$ given that $|\text{GPE}|_{\pm}^{2}$ is \#P-hard. 
In other words, given that the polynomial hierarchy does not collapse to the finite level, there cannot be any classical algorithm that can simulate ideal boson sampling within the total variation distance $\beta$ in $\text{poly}(N,\beta^{-1})$ time.

\cor{
Based on this understanding, we now extend this analysis to the noisy version as follows.
First, we denote $x^{*}$ as the \textit{fixed} indistinguishability rate of a partially distinguishable boson sampling hereafter, such that $1 - x^{*}$ characterizes the noise rate of the noisy boson sampling.
To derive the classical hardness result for this noisy boson sampling with indistinguishability rate $x^{*}$, which is the goal of this work, we now introduce our key strategy and assumption: 
we consider an indistinguishability rate $x$ below $x^{*}$ (i.e., $x \in [0, x^{*}]$) as an \textit{input variable} to the classical sampler simulating the noisy boson sampling, and derive the hardness result for this classical sampler throughout our work.
This constitutes the main distinction from the previous hardness results as in Ref.~\cite{aaronson2016bosonsampling}, and represents a crucial aspect of our approach, because, as it turns out, we utilize different values of $x$ to derive our hardness results.

In fact, considering $x \in [0, x^{*}]$ as an input variable is motivated by the crucial observation that the noisy boson sampling becomes classically \textit{easier} as $x$ decreases~\cite{renema2018efficient, renema2018classical, moylett2019classically, shchesnovich2019noise, renema2020simulability, shi2022effect, van2024efficient}.
That is, if noisy boson sampling with indistinguishability $x^{*}$ is classically simulable, then so is noisy boson sampling with indistinguishability $x \leq x^{*}$.
Let us assume that this observation holds generally for any $x^{*}$ and $x \in [0, x^{*}]$, which is a plausible assumption supported by previous findings~\cite{renema2018efficient, renema2018classical, moylett2019classically, shchesnovich2019noise, renema2020simulability, shi2022effect, van2024efficient}.
Then, one can deduce that, the classical hardness of simulating noisy boson sampling that can choose any indistinguishability $x \in [0, x^{*}]$ implies the classical hardness of simulating noisy boson sampling with a fixed indistinguishability $x^{*}$.



}

\cor{To derive the classical hardness of simulating noisy boson sampling that takes an indistinguishability rate $x \in [0, x^{*}]$ as an input, }
we now set the problem for partially distinguishable boson sampling in correspondence to $|\text{GPE}|_{\pm}^{2}$, i.e., average-case estimation of the noisy output probability $q(x, X)$ in Eq.~\eqref{noisyoutputprobability}.
We formally state our main problem: Estimating the sum of Partially-positive Gaussian Permanent, which we will denote as $|\text{PGPE}|_{\pm}^{2}$ hereafter.

\begin{problem}[$|\text{PGPE}|_{\pm}^{2}$]
Given as input an indistinguishability rate \cor{$x \in [0, x^{*}]$ for a fixed $x^{*}$} and a matrix $X \sim \mathcal{N}(0,1)_{\mathbb{C}}^{N\times N}$ of i.i.d. Gaussians, together with error bounds $\epsilon,\,\delta > 0$, estimate $q(x, X)$ to within additive error $\pm \epsilon\cdot N!$ with probability at least $1-\delta$ over $X$ in $\text{poly}(N,\epsilon^{-1},\delta^{-1})$ time. 
\end{problem}

\cor{
To highlight the changes in $|\text{PGPE}|_{\pm}^{2}$ from the ideal case $|\text{GPE}|_{\pm}^{2}$, we modify the output probability to account for noise (i.e., $q(X) \rightarrow q(x, X)$), and also introduce an additional input indistinguishability rate $x \in [0, x^{*}]$.
}

\cor{In correspondence to $|\text{PGPE}|_{\pm}^{2}$ problem,} we can formally define a randomized classical sampler $\mathcal{S}$ that simulates a partially distinguishable boson sampling within total variation distance error $\beta$,
where now the inputs of $\mathcal{S}$ are $(i)$ a circuit unitary matrix $U$, and $(ii)$ \cor{an indistinguishability rate $x \in [0, x^{*}]$}.
Then, similarly as before, to show that the classical sampler $\mathcal{S}$ for partially distinguishable boson sampling is intractable, it suffices to prove that $|\text{PGPE}|_{\pm}^{2}$ is \#P-hard.
Specifically, given that the input $U$ is a Haar-random unitary matrix, the (rescaled) output probability of the noisy boson sampling can be represented as $q(x, X)$ for $X \sim \mathcal{N}(0,1)_{\mathbb{C}}^{N\times N}$.
Then, given access to the classical sampler $\mathcal{S}$, on input a Haar random unitary matrix $U$ and \cor{an indistinguishability rate $x \in [0, x^{*}]$}, $|\text{PGPE}|_{\pm}^{2}$ is in complexity class $\rm{BPP}^{NP^{\mathcal{S}}}$ for $\beta = O(\epsilon\delta)$ by Stockmeyer's reduction~\cite{aaronson2011computational, stockmeyer1985approximation}.
Therefore, if $|\text{PGPE}|_{\pm}^{2}$ is \#P-hard, there cannot be any efficient classical algorithm for $\mathcal{S}$ unless the polynomial hierarchy collapses.
\cor{Moreover, under our key assumption that classical simulability does not degrade with decreasing $x$, hardness of $\mathcal{S}$ implies the classical hardness of simulating partially distinguishable boson sampling with a fixed indistinguishability rate $x^{*}$.
}

\cor{

Next, to derive the \#P-hardness of $|\text{PGPE}|_{\pm}^{2}$, we take a similar approach to Ref.~\cite{aaronson2016bosonsampling}, i.e., establishing reduction from $|\text{GPE}|_{\pm}^{2}$ to $|\text{PGPE}|_{\pm}^{2}$. 
Specifically, we aim to identify the minimum (threshold) $x^{*}$ that allows \textit{polynomial} reduction from $|\text{GPE}|_{\pm}^{2}$ to $|\text{PGPE}|_{\pm}^{2}$.
Then, summarizing all the previous arguments, this implies that for $x^{*}$ above such threshold value, partially distinguishable boson sampling with its indistinguishability rate $x^{*}$ is equivalently hard to classically simulate as the ideal boson sampling.}

To quantify the threshold of $x^{*}$ that allows the polynomial reduction from $|\text{GPE}|_{\pm}^{2}$ to $|\text{PGPE}|_{\pm}^{2}$, we employ a convention for the input indistinguishability rate as $x = 1 - \frac{k}{N}$.
Here, because $x$ can be interpreted as the probability of being indistinguishable for each photon~\cite{moylett2019classically}, the average number of indistinguishable photons is ${Nx = N - k}$.
Then, similarly to the photon-loss case as in Ref.~\cite{oszmaniec2018classical}, $k = (1-x)N$ represents the average number of photons subject to noise, and in our case, it represents the average number of distinguishable photons. 
\cor{Also, the fixed indistinguishability rate can be represented as $x^{*} = 1 - \frac{k_{\text{min}}}{N}$, where the subscript min comes from the fact that $x^{*}$ is indeed a maximum value over input indistinguishability rate $x$. }

Finally, we introduce our main result.
We show that for $k_{\text{min}} = O(\log N)$, $|\text{PGPE}|_{\pm}^{2}$ is polynomially reducible from $|\text{GPE}|_{\pm}^{2}$.
Informally, our main result can be expressed as follows.

\begin{theorem}[Informal]\label{theorem1informal}
\cor{For $x^{*}$ satisfying $(1-x^{*})N = O(\log N)$, $|\rm{PGPE}|_{\pm}^{2}$ is polynomially reducible from $|\rm{GPE}|_{\pm}^{2}$.
In other words,} if $\mathcal{O}$ is an oracle that solves $|\rm{PGPE}|_{\pm}^{2}$ for $x^{*} = 1 - \frac{k_{\rm{min}}}{N}$ with $k_{\rm{min}} = O(\log N)$, together with $\epsilon, \delta = O({\rm{poly}}(N, \epsilon_0^{-1}, \delta_0^{-1})^{-1})$, then $|\rm{GPE}|_{\pm}^{2}$ can be solved in $\rm{BPP}^{\mathcal{O}}$. 
\end{theorem}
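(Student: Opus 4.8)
The plan is to exploit that the noisy output probability $q(x,X)$ in Eq.~\eqref{noisyoutputprobability} is a degree-$N$ polynomial in the indistinguishability rate $x$, and that its value at $x=1$ is exactly the quantity we want: setting $x=1$ kills every term except $j=N$, so $q(1,X)=q_N(X)=q(X)$. Equivalently, $q(X)$ is the top Bernstein coefficient of $x\mapsto q(x,X)$. Since an oracle $\mathcal{O}$ for $|\mathrm{PGPE}|_{\pm}^{2}$ estimates $q(x,X)$ to additive error $\pm\epsilon N!$ at any $x\in[0,x^{*}]$, the reduction is to query $\mathcal{O}$ at a set of well-chosen nodes $x_1,\dots,x_m\le x^{*}$, fit the polynomial through the noisy values, and extrapolate to $x=1$ to recover $q(X)$ to additive error $\pm\epsilon_0 N!$. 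This places $|\mathrm{GPE}|_{\pm}^{2}$ in $\mathrm{BPP}^{\mathcal{O}}$, with the running time governed by how small $\epsilon$ (hence how large $1/\epsilon$) must be, i.e.\ by the error amplification of the extrapolation.

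First I would bound that amplification, which is where the threshold $k_{\mathrm{min}}=O(\log N)$ must originate. A black-box Chebyshev estimate treats $q(\cdot,X)$ as an arbitrary degree-$N$ polynomial bounded on $[0,x^{*}]$; extrapolating to $x=1$, whose affine image in $[-1,1]$ lies at $1+\Theta(k_{\mathrm{min}}/N)$, costs a factor $T_N(1+\Theta(k_{\mathrm{min}}/N))\sim\exp(\Theta(\sqrt{N k_{\mathrm{min}}}))$, which is already super-polynomial at $k_{\mathrm{min}}=\Theta(\log N)$. The key idea to beat this is to cut the \emph{effective} degree from $N$ down to $O(\log N)$. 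Because $\E[q_j(X)]=N!$ for every $j$ (all Bernstein coefficients share one scale) while the weight $\binom{N}{j}x^j(1-x)^{N-j}$ concentrates around $j\approx Nx$, restricting the nodes to a narrow window $x_i\in[x^{*}-\Delta,x^{*}]$ near the top makes the binomial mass on indices $j<N-L$ exponentially small, so $q(x,X)$ agrees with the truncation $\sum_{j=N-L}^{N}\binom{N}{j}x^j(1-x)^{N-j}q_j(X)$ up to additive error $N!/\mathrm{poly}(N)$ for a suitable $L=O(\log N+\log\epsilon_0^{-1})$. Factoring out the known $x^{N-L}$ leaves a degree-$L$ polynomial $h(x)$ with $h(1)=q_N(X)=q(X)$, so it remains only to extrapolate a degree-$L$, not a degree-$N$, polynomial.

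The heart of the argument is then an error budget that I would balance as follows. Taking the window width $\Delta=\Theta(\log N/N)$ keeps $\Delta N=O(\log N)$ (so the truncation still holds with $L=O(\log N)$) while placing $x=1$ at affine image $1+\Theta(k_{\mathrm{min}}/(N\Delta))=1+\Theta(1)$ relative to $[x^{*}-\Delta,x^{*}]$; the extrapolation factor is then $T_L(1+\Theta(1))=e^{\Theta(L)}=\mathrm{poly}(N)$ since $L=O(\log N)$. The remaining multiplicative losses—dividing node values by $x_i^{N-L}$ to pass from $q(x_i,X)$ to $h(x_i)$, which contributes $x_i^{-(N-L)}=\mathrm{poly}(N)$ for $k_{\mathrm{min}}=O(\log N)$, and the Lagrange/interpolation constant on $O(\log N)$ Chebyshev nodes—are each $\mathrm{poly}(N)$. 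Multiplying these, the oracle error $\epsilon N!$ is amplified to $\epsilon N!\cdot\mathrm{poly}(N)$, so taking $\epsilon=\epsilon_0/\mathrm{poly}(N)$ and, by a union bound over the $O(\log N)$ queries (all sharing the same $X$), $\delta=\delta_0/\mathrm{poly}(N)$, yields an estimate of $q(X)$ within $\pm\epsilon_0 N!$ with probability $\ge1-\delta_0$, exactly as $|\mathrm{GPE}|_{\pm}^{2}$ demands.

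I expect the main obstacle to be precisely this balancing. Forcing the truncation error, the $x^{-(N-L)}$ rescaling, and the Chebyshev extrapolation factor to be \emph{simultaneously} polynomial pins the three scales into the regime $k_{\mathrm{min}},L=O(\log N)$ and $\Delta=\Theta(\log N/N)$, and it is the incompatibility of these demands once $k_{\mathrm{min}}\gg\log N$ that fixes the threshold at $O(\log N)$. The most delicate technical step is turning the exact-interpolation intuition into a \emph{robust} bound that tolerates both the oracle's additive error and its average-case ($1-\delta$ over $X$) guarantee; I would handle this by quantifying the extrapolation functional on explicit Chebyshev nodes and propagating the per-node error bound through the union bound over the $O(\log N)$ queries.
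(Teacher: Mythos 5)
Your proposal is correct and follows essentially the same route as the paper's proof: truncate the binomial sum in $q(x,X)$ to an effective degree $l = O(\log N + \log \epsilon_0^{-1}\delta_0^{-1})$ using tail concentration and $\E_X[q_j(X)] = N!$ (the paper makes this an average-case bound via Markov's inequality in Lemma~\ref{lemma:distancebytruncationissmall}), factor out $x^{N-l}$ to obtain a degree-$l$ polynomial equal to $|\Per(X)|^2$ at $x=1$, query the oracle on a window of width $\Theta(\log N/N)$ below $x^{*}$, and extrapolate with a $\mathrm{poly}(N)$ blowup, taking a union bound over the $O(\log N)$ queries. The only cosmetic difference is your use of Chebyshev nodes where the paper uses equally spaced nodes with the Lagrange-interpolation bound of Kondo et al.; both yield the same $e^{O(l)} = \mathrm{poly}(N)$ amplification.
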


We provide in Sec.~\ref{Section:reduction} a formal statement of Theorem~\ref{theorem1informal}, followed by a detailed proof of the theorem.
\cor{Here, let us emphasize that, by Theorem~\ref{theorem1informal}, partially distinguishable boson sampling that has on average $(1-x^{*})N = O(\log N)$ number of distinguishable photons is complexity-theoretically equivalent to the ideal boson sampling.}
More specifically, $\epsilon^{-1}$ and $\delta^{-1}$ in Theorem~\ref{theorem1informal} are polynomially related to $N$, $\epsilon_0^{-1}$ and $\delta_0^{-1}$.
Hence, for $\epsilon$ and $\delta$ given in Theorem~\ref{theorem1informal}, the allowed operation time to solve $|\rm{PGPE}|_{\pm}^{2}$ problem is still in $\text{poly}(N,\epsilon_0^{-1},\delta_0^{-1})$. 
Accordingly, by Theorem~\ref{theorem1informal}, if one can classically simulate a noisy boson sampling with up to $k_{\text{min}} = c_{\text{min}}\log N$ number of distinguishable photons on average, to within $O(\epsilon\delta)$ total variation distance, then one can also solve $|\rm{GPE}|_{\pm}^{2}$ problem in $\text{BPP}^{\text{NP}}$. 
Therefore, under the complexity-theoretic conjecture that $|\rm{GPE}|_{\pm}^{2}$ is \#P-hard, we can deduce that such a noisy boson sampling is hard to classically simulate. 

In addition, by using a similar process to the proof of Theorem~\ref{logpdhardness}, we also find that if the average number of distinguishable photons $k$ scales sub-logarithmically with $N$, then the requirement of the approximation error $\epsilon$ for the reduction from $|\rm{GPE}|_{\pm}^{2}$ to $|\rm{PGPE}|_{\pm}^{2}$ is comparably less stringent.

\begin{corollary}\label{sublogpdhardness}
If $\mathcal{O}$ is an oracle that solves $|\rm{PGPE}|_{\pm}^{2}$ for $x^{*} = 1 - \frac{k_{\rm{min}}}{N}$ with $k_{\rm{min}} = o(\log N)$, $\epsilon = O(\epsilon_0^{\alpha+1}\delta_0^{\alpha}N^{-\alpha})$ with any constant $\alpha>0$, and  $\delta =  O\left(\frac{\delta_0}{\log N + \log\epsilon_0^{-1}\delta_0^{-1}}\right)$, then $|\rm{GPE}|_{\pm}^{2}$ can be solved in $\rm{BPP}^{\mathcal{O}}$. 
\end{corollary}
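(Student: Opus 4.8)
The plan is to adapt the truncated polynomial–interpolation reduction of Theorem~\ref{logpdhardness} to the sub-logarithmic regime, tracking how the smaller value of $k_{\rm min}$ loosens the error budget. The starting point is that, after the substitution $t=(1-x)/x$, the rescaled noisy probability of Eq.~\eqref{noisyoutputprobability} becomes
\begin{equation}
h(t)\;:=\;(1+t)^{N}\,q(x,X)\;=\;\sum_{\ell=0}^{N}\binom{N}{\ell}\,t^{\ell}\,q_{N-\ell}(X),
\qquad t=\frac{1-x}{x},
\end{equation}
a degree-$N$ polynomial in $t$ whose constant term is exactly the target $q_{N}(X)=|\Per(X)|^{2}=q(X)$ of $|\text{GPE}|_{\pm}^{2}$. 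Querying $\mathcal{O}$ at an indistinguishability rate $x_{i}=1/(1+t_{i})\in[0,x^{*}]$ returns $q(x_{i},X)$ within $\pm\epsilon N!$, hence $h(t_{i})$ within $\pm(1+t_{i})^{N}\epsilon N!$. I would fix $m$ nodes $t_{1},\dots,t_{m}$ clustered in $[t_{\min},O(t_{\min})]$ with $t_{\min}=(1-x^{*})/x^{*}\approx k_{\rm min}/N$, take the Lagrange weights $c_{i}=\prod_{j\neq i}\tfrac{-t_{j}}{t_{i}-t_{j}}$ that reproduce $h(0)$, and output $\widehat{q}(X)=\sum_{i}c_{i}(1+t_{i})^{N}\,\widehat{q}(x_{i},X)$.

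By construction $\sum_{i}c_{i}(1+t_{i})^{N}q(x_{i},X)=q_{N}(X)+\sum_{\ell\ge m}\binom{N}{\ell}q_{N-\ell}(X)\,d_{\ell}$ with $d_{\ell}=\sum_{i}c_{i}t_{i}^{\ell}$, so the deviation of $\widehat{q}(X)$ from $q(X)$ splits into an oracle part $\sum_{i}c_{i}(1+t_{i})^{N}(\widehat{q}-q)$ and a truncation part $\sum_{\ell\ge m}\binom{N}{\ell}q_{N-\ell}(X)\,d_{\ell}$. For the truncation part I would use the moment identity $\E_{X}[q_{j}(X)]=N!$ for every $j$: by Eq.~\eqref{outputprobabilitywhendistinguishabilityapplied} only the diagonal terms $\sigma=\rho$ survive the Gaussian average, giving $\E_{X}[q(x,X)]=N!$ for all $x$, and matching Bernstein coefficients via Eq.~\eqref{fixedpdoutputprobability} yields $\E_{X}[q_{j}(X)]=N!$. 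Combined with $\binom{N}{\ell}t_{i}^{\ell}\lesssim (ck_{\rm min})^{\ell}/\ell!$, Markov's inequality bounds the truncation part by $\tfrac13\epsilon_{0}N!$ with probability $1-\tfrac12\delta_{0}$ once $\sum_{\ell\ge m}\binom{N}{\ell}|d_{\ell}|=O(\epsilon_{0}\delta_{0})$. Since $(ck_{\rm min})^{\ell}/\ell!$ decays super-exponentially for $\ell\gtrsim k_{\rm min}$, it suffices to take $m=O\big(\log N+\log\epsilon_{0}^{-1}\delta_{0}^{-1}\big)$ nodes; a union bound over these $m$ oracle calls is exactly why $\delta=O\big(\delta_{0}/(\log N+\log\epsilon_{0}^{-1}\delta_{0}^{-1})\big)$ is imposed.

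It remains to control the oracle part, $\big(\sum_{i}|c_{i}|(1+t_{i})^{N}\big)\epsilon N!$, and this is where the sub-logarithmic hypothesis pays off. Because every node lies in $[t_{\min},O(t_{\min})]$, we have $(1+t_{i})^{N}\le e^{O(Nt_{\min})}=e^{O(k_{\rm min})}=N^{o(1)}$ when $k_{\rm min}=o(\log N)$, strictly milder than the fixed polynomial factor that Theorem~\ref{logpdhardness} incurs at $k_{\rm min}=\Theta(\log N)$. The node-dependent amplification $\sum_{i}|c_{i}|$ is the only remaining factor, and by choosing the node count and placement the combined amplification $\sum_{i}|c_{i}|(1+t_{i})^{N}$ can be kept within $O\big((N\epsilon_{0}^{-1}\delta_{0}^{-1})^{\alpha}\big)$ for any prescribed constant $\alpha>0$, the slack being precisely the sub-polynomial $(1+t_{i})^{N}$. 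Demanding that this factor times $\epsilon N!$ be at most $\tfrac13\epsilon_{0}N!$ then yields exactly $\epsilon=O\big(\epsilon_{0}^{\alpha+1}\delta_{0}^{\alpha}N^{-\alpha}\big)$. Together with the truncation bound, $\widehat{q}(X)$ estimates $q(X)$ within $\epsilon_{0}N!$ with probability $\ge 1-\delta_{0}$, so $|\text{GPE}|_{\pm}^{2}$ is solved in $\text{BPP}^{\mathcal{O}}$.

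The step I expect to be the main obstacle is the simultaneous control of the truncation tail and the oracle amplification over a single node set: shrinking $m$ or tightening the cluster suppresses $(1+t_{i})^{N}$ and the number of union-bound events but inflates $\sum_{i}|c_{i}|$ and weakens the cancellation of the corrections $q_{N-\ell}$, while loosening it does the reverse. The hypothesis $k_{\rm min}=o(\log N)$ is exactly what makes this trade-off slack, since the tail $\sum_{\ell\ge m}\binom{N}{\ell}|d_{\ell}|$ collapses for logarithmic $m$ while the factor $(1+t_{i})^{N}$ stays sub-polynomial; verifying that one fixed node configuration delivers both the tail estimate of the previous paragraph and the amplification bound $O\big((N\epsilon_{0}^{-1}\delta_{0}^{-1})^{\alpha}\big)$, with the constants tied consistently to $\alpha$, is the one place where the otherwise routine calculation must be carried out carefully.
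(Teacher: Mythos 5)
Your proposal is correct and follows essentially the same route as the paper's proof: truncate the noisy probability to a polynomial of degree $m=a_0\log N+a_1\log\epsilon_0^{-1}\delta_0^{-1}$ with arbitrarily small constants $a_0,a_1$ (made possible because the binomial tail decays like $e^{-m\log(m/k_{\max})}$ with $\log(m/k_{\max})\to\infty$ when $k_{\min}=o(\log N)$), bound the truncation error via $\E_X[q_j(X)]=N!$ and Markov, extrapolate from nodes with noise rates in $[x_{\min},x^{*}]$ to the noiseless point with $e^{O(m)}=\text{(arbitrarily small power)}$ amplification, and union-bound over the $m$ oracle calls to get the stated $\delta$. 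The only differences are organizational — you work in the variable $t=(1-x)/x$ and fold the truncation step into the failure of the extrapolation weights to annihilate the high-degree terms, and you obtain $\E_X[q_j(X)]=N!$ by matching Bernstein coefficients rather than by direct computation — none of which changes the substance of the argument.
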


\begin{proof}
    See Appendix~\ref{appendix:proofofcorollary}.
\end{proof}


\section{Reducing $|\text{GPE}|_{\pm}^{2}$ to $|\text{PGPE}|_{\pm}^{2}$ for $x^{*} = 1 - \frac{k_{\rm{min}}}{N}$ with $k_{\text{min}} = O(\log N)$}\label{Section:reduction}

In this section, we establish a reduction from $|\text{GPE}|_{\pm}^{2}$ to $|\text{PGPE}|_{\pm}^{2}$, especially for the case that the minimum average number of distinguishable photons scales logarithmically: $k_{\text{min}} = O(\log N)$. 
Here, to simplify the analysis, we set $k_{\text{min}}$ as $c_{\text{min}}\log N$ for a constant $c_{\text{min}}$, without loss of generality.
Using this convention, we restate our main theorem for the reduction from $|\text{GPE}|_{\pm}^{2}$ to $|\text{PGPE}|_{\pm}^{2}$, which can be formally represented as follows.

\begin{theorem}\label{logpdhardness}
If $\mathcal{O}$ is an oracle that solves $|\rm{PGPE}|_{\pm}^{2}$ for $x^{*} = 1 - \frac{k_{\rm{min}}}{N}$ with $k_{\rm{min}} = c_{\rm{min}}\log N$, together with $\epsilon = O\left(\epsilon_0^{7.1}\delta_0^{6.1} N^{-40 c_{\text{min}}}\right)$, and $\delta =  O\left(\frac{\delta_0}{\log N + \log\epsilon_0^{-1}\delta_0^{-1}}\right)$, then $|\rm{GPE}|_{\pm}^{2}$ can be solved in $\rm{BPP}^{\mathcal{O}}$. 
\end{theorem}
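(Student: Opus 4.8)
The plan is to reduce $|\text{GPE}|_\pm^2$ to $|\text{PGPE}|_\pm^2$ by polynomial extrapolation in the indistinguishability rate. The starting observation is that the target quantity $q(X)=|\Per(X)|^2=q_N(X)$ is exactly the $x=1$ value of the noisy probability: from the Bernstein representation in Eq.~\eqref{noisyoutputprobability}, $q(x,X)=\sum_{j=0}^N\binom{N}{j}q_j(X)\,x^j(1-x)^{N-j}$ is a degree-$N$ polynomial in $x$ with $q(1,X)=q_N(X)=q(X)$. The oracle $\mathcal{O}$ supplies additive-$\epsilon N!$ estimates of this polynomial at any $x\in[0,x^*]$ with $x^*=1-k_{\min}/N$. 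So the whole problem becomes: reconstruct $q(1,X)$ from noisy samples of a degree-$N$ polynomial taken on the slightly shorter interval $[0,1-k_{\min}/N]$, and control how much the per-sample error $\epsilon N!$ is amplified when we push out to $x=1$.

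The key structural fact I would exploit is that the Bernstein coefficients $q_j(X)$ in Eq.~\eqref{fixedpdoutputprobability} are nonnegative (each is a sum of $|\Per|^2\cdot\Per(|\cdot|^2)$ terms). This is what keeps the extrapolation only mildly unstable. For a generic degree-$N$ polynomial, extrapolating from $[0,x^*]$ to $x=1$ costs a Chebyshev factor $T_N((2-x^*)/x^*)\approx e^{\Theta(\sqrt{k_{\min}N})}$, which is super-polynomial and useless. But nonnegativity pins the weight of $q_N$ at $x^*$ to $(x^*)^N=(1-k_{\min}/N)^N=e^{-\Theta(k_{\min})}=N^{-\Theta(c_{\min})}$, and together with the concentration of the Bernstein weights $b_{j,N}(x)=\binom{N}{j}x^j(1-x)^{N-j}$ around $j\approx N-k_{\min}$, it lets one discard the negligible tail $j\lesssim N-O(k_{\min}+\log\epsilon_0^{-1}\delta_0^{-1})$ and work with an effective window of only $O(\log N+\log\epsilon_0^{-1}\delta_0^{-1})$ coefficients.

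Concretely I would build a linear estimator $\hat q=\sum_\ell w_\ell\,\tilde q(x_\ell,X)$ for $q(1,X)$ on $O(\log N+\log\epsilon_0^{-1}\delta_0^{-1})$ nodes $x_\ell\in[0,x^*]$, so that a union bound over queries turns the per-query failure $\delta$ into total failure $O\big(\delta(\log N+\log\epsilon_0^{-1}\delta_0^{-1})\big)$; this is where the stated $\delta=O(\delta_0/(\log N+\log\epsilon_0^{-1}\delta_0^{-1}))$ comes from. The weights are chosen so that $\sum_\ell w_\ell\,b_{j,N}(x_\ell)\approx[j=N]$ on the retained window, making the truncation bias small, while the noise amplification $\sum_\ell|w_\ell|$ is held at $e^{O(k_{\min})}=N^{O(c_{\min})}$. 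Requiring the amplified error $(\sum_\ell|w_\ell|)\,\epsilon N!$ plus the bias to stay below $\epsilon_0 N!$ then forces $\epsilon\lesssim\epsilon_0\cdot N^{-O(c_{\min})}$, i.e.\ the $N^{-40c_{\min}}$ scaling, with the extra powers of $\epsilon_0,\delta_0$ entering through Stockmeyer's approximate-counting overhead and the $\beta=O(\epsilon\delta)$ relation between sampling error and estimation error.

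The main obstacle is precisely this amplification bound $\sum_\ell|w_\ell|=N^{O(c_{\min})}$: I must show that nonnegativity together with binomial concentration genuinely beats the generic $e^{\Theta(\sqrt{k_{\min}N})}$ blowup, and that a stable weight vector with this property exists on $O(\log N)$ well-placed nodes, since the condition number of the associated Vandermonde/Bernstein system is what ultimately fixes the constant $40$. Everything downstream—assembling the estimator, the union bound over queries, and wrapping the reconstruction in Stockmeyer's reduction to conclude $|\text{GPE}|_\pm^2\in\text{BPP}^{\mathcal{O}}$—I expect to be routine once this quantitative stability estimate is in hand.
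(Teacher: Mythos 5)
Your overall strategy is the paper's: write $q(x,X)$ in Bernstein form, use binomial concentration plus nonnegativity of the $q_j(X)$ to argue that only the window $j\in[N-l,N]$ with $l=O(\log N+\log\epsilon_0^{-1}\delta_0^{-1})$ matters, and then extrapolate to $x=1$ with a linear estimator on $O(l)$ nodes whose noise amplification is $N^{O(c_{\min})}$, with the union bound over nodes producing the stated $\delta$. However, you explicitly leave unresolved the one step that carries the whole proof: the existence of a weight vector with $\sum_\ell w_\ell\,b_{j,N}(x_\ell)=[j=N]$ on the window and $\sum_\ell|w_\ell|=N^{O(c_{\min})}$. The paper closes this with a concrete device you are missing: after truncation, factor out $x^{N-l}$, so that $f_X^{(l)}(x)=x^{-N+l}q^{(l)}(x,X)$ is an \emph{honest} degree-$l$ polynomial with $f_X^{(l)}(1)=|\Per(X)|^2$. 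Extrapolation then reduces to standard Lagrange interpolation of a degree-$l$ polynomial from $l+1$ equally spaced nodes in $[x_{\min},x^*]$ (rescaled to $[-\Delta,\Delta]$ with $\Delta=\frac{c_{\max}-c_{\min}}{c_{\max}+c_{\min}}$), whose extrapolation constant is $e^{l(1+\log\Delta^{-1})}=N^{O(c_{\min})}\,\mathrm{poly}(\epsilon_0^{-1}\delta_0^{-1})$ by the Kondo et al.\ bound; the multiplication by $x^{-N+l}\le e^{k_{\max}}=N^{c_{\max}}$ is the only other amplification. Note also that your nodes must lie in $[1-k_{\max}/N,\,x^*]$, not anywhere in $[0,x^*]$: both the tail bound and the $x^{-N+l}$ factor blow up for $x$ away from $1$.

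Two further inaccuracies. First, you attribute the powers $\epsilon_0^{7.1}\delta_0^{6.1}$ to Stockmeyer's reduction; Stockmeyer plays no role in this theorem (it only connects the sampler to the estimation problem elsewhere). Those powers arise because (i) ``discarding the tail'' is not deterministic --- the paper bounds $\mathbb{E}_X[q_j(X)]=N!$ and applies Markov's inequality, which costs a factor $\delta^{-1}=O(\delta_0^{-1}l)$ in the truncation error $\epsilon_1$, and (ii) forcing $\epsilon_1\ll\epsilon_0N^{-c_{\max}-c_l(1+\log\Delta^{-1})}$ then requires $c_l$ to contain a term $3\log\epsilon_0^{-1}\delta_0^{-1}/\log N$, which feeds $\mathrm{poly}(\epsilon_0^{-1}\delta_0^{-1})$ back into the amplification $e^{O(l)}$. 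Second, your sketch never supplies the average-case control of the coefficients $q_j(X)$ needed to make ``the tail is negligible'' precise; nonnegativity alone does not bound the tail, since individual $q_j(X)$ can be large --- the $\mathbb{E}_X[q_j(X)]=N!$ computation plus Markov over $X$ is an essential ingredient, and it is also where the extra factor in $\delta$ versus $\delta_0$ interacts with the error budget.
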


Because $c_{\text{min}}$ is a constant, $\epsilon^{-1}$ and $\delta^{-1}$ in Theorem~\ref{logpdhardness} are at most polynomially related to $N$, $\epsilon_0^{-1}$ and $\delta_0^{-1}$.
As we have discussed in the previous section, this indicates the complexity-theoretical equivalence of classically simulating partially distinguishable boson sampling with $O(\log N)$ number of distinguishable photons and classically simulating ideal boson sampling.

In the rest of the section, we provide a step-by-step proof of Theorem~\ref{logpdhardness}.

\begin{figure*}[t]
\includegraphics[width=0.9\linewidth]{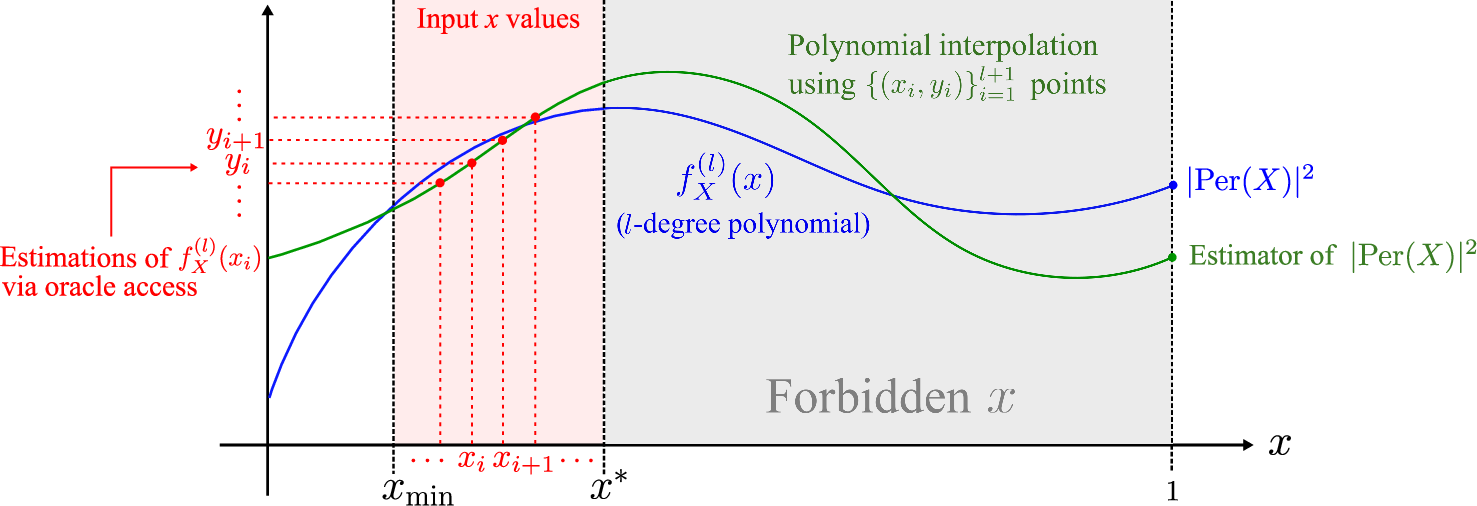}
\caption{Schematics for the proof of Theorem~\ref{logpdhardness}, i.e., how to solve $|\text{GPE}|_{\pm}^{2}$ problem via oracle access to $|\text{PGPE}|_{\pm}^{2}$ problem. 
We first find the $l$-degree polynomial $f_X^{(l)}(x)$ that can be well estimated via oracle access to the $|\text{PGPE}|_{\pm}^{2}$ problem for $x\le x^{*}$, and satisfies $f_X^{(l)}(1) = |\Per(X)|^2$ (blue line).
Then, for each of $l+1$ number of input indistinguishability rate $x_i \in [x_{\text{min}},x^{*}]$, we obtain $y_i$ value that estimate $f_X^{(l)}(x_i)$ via the oracle access (red dots). 
Lastly, we conduct polynomial interpolation with given $\{x_i,y_i\}_{i=1}^{l+1}$ points to infer the polynomial $f_X^{(l)}(x)$ (green line) and finally obtain the estimation value of $|\Per(X)|^2$ at $x=1$. 
}
\label{fig:proofoutline}
\end{figure*}

\subsection{Proof sketch}
Let us first sketch the proof of Theorem~\ref{logpdhardness} by briefly describing the reduction process from $|\text{GPE}|_{\pm}^{2}$ to $|\text{PGPE}|_{\pm}^{2}$.
As the noisy output probability $q(x, X)$ in Eq.~\eqref{noisyoutputprobability} is a polynomial in $x$, given access to $q(x, X)$ values for different values of input $x \le x^{*}$, we use polynomial interpolation technique to infer the value at $x= 1$, i.e., $q(x = 1,X) = |\Per(X)|^2$. 
Indeed, this process is similar to one of the main techniques of error mitigation, i.e., zero-noise extrapolation~\cite{cai2023quantum, temme2017error, li2017efficient}. 
However, since $q(x, X)$ is an $N$-degree polynomial in $x$, directly using the polynomial interpolation for $q(x, X)$ induces at least exponential additive imprecision blowup in $N$ as given in Ref.~\cite{paturi1992degree} (i.e., $\sim e^{O(N)}$), which requires the approximation error $\epsilon$ in $|\text{PGPE}|_{\pm}^{2}$ to be exponentially small ($\sim e^{-O(N)}$) for the reduction.

To avoid this issue, our main strategy is to find a \textit{low-degree polynomial} in $x$, that $(i)$ can be well-estimated via the access to the estimated values of $q(x, X)$ for $x \le x^{*}$ (i.e., access to an oracle that can solve the $|\text{PGPE}|_{\pm}^{2}$ problem), and $(ii)$ contains the desired value $|\Per(X)|^2$ at $x=1$.
Here and throughout, we will use a notation $f_X^{(l)}(x)$ as the low-degree polynomial described above, with $l$ corresponding to the degree of $f_X^{(l)}(x)$.
Then, given the well-estimated values of $f_X^{(l)}(x)$ with each corresponding to $l+1$ number of $x \le x^{*}$ values, we can use the polynomial interpolation for $f_X^{(l)}(x)$, and infer the desired value $f_X^{(l)}(1) = |\Per(X)|^2$. 
This overall process is illustrated in Fig.~\ref{fig:proofoutline}.
Note that the degree $l$ of the low-degree polynomial $f_X^{(l)}(x)$ should be at most $O(\log (\text{poly}(N, \epsilon_0^{-1}, \delta_0^{-1})))$, so that the polynomial interpolation process induces an imprecision blowup that scales at most $e^{O(\log (\text{poly}(N, \epsilon_0^{-1}, \delta_0^{-1})))} = \text{poly}(N, \epsilon_0^{-1}, \delta_0^{-1})$.

\subsection{Conventions}
Before proceeding to the main proof, we establish the parameters that will be frequently used throughout the proof. 
Given the fixed indistinguishability rate $x^{*}$ which is indeed the upper bound of input indistinguishability values $x$, we similarly set the lower bound of $x$ as $x_{\text{min}}$, such that we further restrict the input $x \in [x_{\text{min}},x^{*}]$ to the oracle. 
As the indistinguishability rate $x$ can be expressed as $1 - \frac{k}{N}$ by the average noisy photon number $k$, those boundary values of $x$ can also be represented as
\begin{align}\label{relationbetweenxandk}
    \begin{split}
        x_{\text{min}} = 1 - \frac{k_{\text{max}}}{N}, \quad x^{*} = 1 - \frac{k_{\text{min}}}{N},
    \end{split}
\end{align}
where $k_{\text{min}}$ ($k_{\text{max}}$) denotes the lower (upper) bound of the average noisy photon number $k$, such that $k_{\text{min}} \le k \le k_{\text{max}}$.

For simplicity, as we are interested in logarithmically scaling $k$ with $N$, we use a convention $k = c\log N$ for a constant $c$ without loss of generality. 
Then, the boundary values of $k$ can also be represented as 
\begin{align}\label{relationbetweenkandc}
    k_{\text{max}} = c_{\text{max}}\log N ,\quad k_{\text{min}} = c_{\text{min}}\log N ,
\end{align}
where $c_{\text{min}}$ ($c_{\text{max}}$) denotes the lower (upper) bound of the constant $c$, such that $c_{\text{min}} \le c \le c_{\text{max}}$.

Lastly, since we consider the degree $l$ of the low-degree polynomial $f_X^{(l)}(x)$ scales at most logarithmically with $N$, we similarly define $l$ as
\begin{align}
    l = c_l\log N ,
\end{align}
where we set $l$ larger than any $k$ such that $l > k_{\text{max}}$ (in other words, $c_l > c_{\text{max}}$).
It is worth emphasizing that $c_{\text{min}}$ is a given constant determined by the fixed indistinguishability rate $x^{*}$, whereas $c_{\text{max}}$ and $c_{l}$ are parameters we can arbitrarily set. 


\subsection{Finding a low-degree polynomial $f_X^{(l)}(x)$}

Our key idea is to find a low-degree polynomial $f_X^{(l)}(x)$ that can be well-estimated via the oracle access to $|\text{PGPE}|_{\pm}^{2}$ problem and contains the desired value $|\Per(X)|^2$. 
To obtain such an $l$-degree polynomial $f_X^{(l)}(x)$ from the noisy output probability $q(x, X)$ in Eq.~\eqref{noisyoutputprobability}, our approach is first to truncate the summation terms in $q(x, X)$ and only leave $l+1$ number of terms from $j = N-l$ to $j = N$.

Here, given $l > k_{\text{max}}$, the summation terms in Eq.~\eqref{noisyoutputprobability} from $j=0$ to $j = N-l-1$ correspond to the tail of the binomial distribution $B(N,x)$, whose expectation number of success is $Nx = N - k$. 
Hence, if $l$ is sufficiently larger than $k_{\text{max}}$, one can expect that those tail terms from $j=0$ to $j = N-l-1$ would be small enough.
In other words, we expect that
\begin{align}
    q(x, X)
    &=  \sum_{j=0}^{N}\binom{N}{j}x^{j}(1-x)^{N-j}q_{j}(X) \\
    &\approx  \sum_{j=N-l}^{N}\binom{N}{j}x^{j}(1-x)^{N-j}q_{j}(X),
\end{align}
for $l$ sufficiently larger than $k_{\text{max}}$. 

Hereafter, we newly define $q^{(l)}(x,X)$ as the output probability after truncating the tail terms from $j=0$ to $j = N-l-1$ in $q(x, X)$, such that
\begin{equation}\label{truncatedoutputprobability}
q^{(l)}(x, X) \coloneq \sum_{j=N-l}^{N}\binom{N}{j}x^{j}(1-x)^{N-j}  q_{j}(X) .
\end{equation}
We show that the truncated output probability $q^{(l)}(x,X)$ can be made arbitrarily close to $q(x, X)$, by increasing $l$ sufficiently larger than $k_{\text{max}}$.  
Specifically, we prove the following lemma. 

\begin{lemma}\label{lemma:distancebytruncationissmall}
For $X \sim \mathcal{N}(0,1)_{\mathbb{C}}^{N\times N}$ and $x \ge 1 - \frac{k_{\text{max}}}{N}$, $q(x, X)$ and $q^{(l)}(x,X)$ are $\epsilon_1N!$-close for
\begin{equation}\label{epsilon1}
    \epsilon_1 = \delta^{-1}N^{-\frac{c_{\text{max}}}{3}\left(\frac{c_l}{c_{\text{max}}} - 1 \right)^2 },
\end{equation}
with probability at least $1-\delta$ over $X$.
\end{lemma}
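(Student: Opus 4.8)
The plan is to bound the truncation error
\[
\left|q(x,X) - q^{(l)}(x,X)\right| = \left|\sum_{j=0}^{N-l-1}\binom{N}{j}x^{j}(1-x)^{N-j}q_{j}(X)\right|
\]
by controlling two factors separately: the combinatorial weight $\binom{N}{j}x^{j}(1-x)^{N-j}$, which is exactly the binomial tail, and the magnitude of the fixed-distinguishability output probabilities $q_{j}(X)$. First I would observe that each $q_{j}(X)$ defined in Eq.~\eqref{fixedpdoutputprobability} is a convex-combination-like average of nonnegative terms $|\Per(X_{I,J})|^2\Per(|X_{\bar I,\bar J}|^2)$, so the whole sum $\sum_{j} \binom{N}{j} x^j (1-x)^{N-j} q_j(X)$ is bounded above by $q(0^+)$-type quantities; more usefully, since the weights are a probability distribution in $j$, I would bound the partial tail sum by $\big(\max_j |q_j(X)|\big)$ times the binomial tail probability $\Pr[B(N,x)\le N-l-1]$.

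The key probabilistic step is a Chernoff bound on the lower tail of $B(N,x)$. With mean $\mu = Nx = N-k \ge N - k_{\max}$ and deviation threshold $N-l-1$, since $l>k_{\max}\ge k$ the event $\{B(N,x)\le N-l-1\}$ is a genuine lower-tail deviation of roughly $l-k$ below the mean. The standard multiplicative Chernoff bound gives a factor of order $\exp\!\big(-\tfrac{(l-k)^2}{\text{const}\cdot N x}\big)$; substituting $l = c_l\log N$, $k \le c_{\max}\log N$, and $Nx \approx N$ should, after the algebra, collapse into a power of $N$ of the form $N^{-\frac{c_{\max}}{3}(c_l/c_{\max}-1)^2}$ matching Eq.~\eqref{epsilon1}. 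I would be careful to phrase the deviation so that the factor $(c_l/c_{\max}-1)^2$ emerges naturally; the cleanest route is to lower-bound $l-k \ge l - k_{\max} = (c_l-c_{\max})\log N$ and normalize by $k_{\max}$ inside the Chernoff exponent.

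For the magnitude of $q_j(X)$, the plan is to show $|q_j(X)| \le N!\cdot\text{poly}(N)$ with high probability over the Gaussian matrix $X$, so that after multiplying by the exponentially small tail probability the product is at most $\epsilon_1 N!$. Concretely, each $|\Per(X_{I,J})|^2$ and $\Per(|X_{\bar I,\bar J}|^2)$ is a permanent of a matrix of i.i.d.\ Gaussians; their expectations are $j!$ and $(N-j)!$ respectively, so $\E[q_j(X)] = \binom{N}{j}^{-1}\binom{N}{j}^2 j!\,(N-j)! = \binom{N}{j} j!\,(N-j)! = N!$. Thus each $q_j(X)$ has mean exactly $N!$, and a Markov or second-moment/concentration argument controls the probability that any of the $N+1$ values $q_j(X)$ substantially exceeds $N!$. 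Absorbing the failure probability into the stated $1-\delta$ via a union bound over $j$, and the residual polynomial factor into $\delta^{-1}$ (which appears explicitly in Eq.~\eqref{epsilon1}), yields the claimed $\epsilon_1 N!$ bound.

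The main obstacle I anticipate is the concentration of $q_j(X)$: permanents of Gaussian matrices have heavy-tailed fluctuations (the variance of $|\Per(X)|^2$ is comparable to its squared mean, and higher moments grow fast), so a naive Markov bound gives only $|q_j(X)| \le \delta^{-1} N!$ with probability $1-\delta$, which is precisely why the factor $\delta^{-1}$ sits out front in Eq.~\eqref{epsilon1}. I would verify that a single application of Markov's inequality to the mean-$N!$ quantity $\sum_j \binom{N}{j}x^j(1-x)^{N-j} q_j(X) = q(x,X)$ (or to the tail sum directly) suffices, so that no delicate higher-moment control is actually needed — the exponentially small binomial tail does the heavy lifting, and the only probabilistic cost over $X$ is the crude $\delta^{-1}$ Markov factor. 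Reconciling the exact bookkeeping of constants to land on $c_{\max}/3$ and the squared factor is the fiddly part, but it is routine once the Chernoff exponent and the $\delta^{-1}$ Markov factor are in place.
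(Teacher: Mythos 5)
Your proposal is correct and follows essentially the same route as the paper's proof: decompose the truncation error as a binomial tail weighted by the nonnegative $q_j(X)$, bound the tail via a Chernoff--Hoeffding bound on $B(N,1-x)$ with mean $k\le k_{\max}$ (your ``normalize by $k_{\max}$'' correction is exactly right --- normalizing by $Nx$ as in your first formulation would give a vacuous bound), compute $\E_X[q_j(X)]=N!$, and apply a single Markov inequality to the tail sum, which is precisely where the $\delta^{-1}$ in $\epsilon_1$ comes from. The detour through $\max_j|q_j(X)|$ and a union bound over $j$ is unnecessary, as you yourself conclude in the final paragraph.
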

We leave detailed proof of Lemma~\ref{lemma:distancebytruncationissmall} in Appendix~\ref{proofoflemma}. 
One can easily check that $\epsilon_1$ in Eq.~\eqref{epsilon1} decreases with increasing $c_l$. 
Hence, by Lemma~\ref{lemma:distancebytruncationissmall}, $q^{(l)}(x,X)$ can be made close enough to $q(x, X)$ over a large portion of $X \sim \mathcal{N}(0,1)_{\mathbb{C}}^{N\times N}$, by setting $c_l$ sufficiently large compared to $c_{\text{max}}$. 


Given the truncated output probability $q^{(l)}(x,X)$ that is close to the noisy output probability $q(x, X)$, now we can obtain the low-degree polynomial $f_{X}^{(l)}(x)$ by decreasing the polynomial degree of $q^{(l)}(x,X)$.
Specifically, $q^{(l)}(x,X)$ can be expressed as $x^{N-l}f_{X}^{(l)}(x)$ for the $l$-degree polynomial $f_{X}^{(l)}(x)$ such that
\begin{align}
    f_{X}^{(l)}(x) &= x^{-N+l}q^{(l)}(x,X) \\
    &= \sum_{j=N-l}^{N}\binom{N}{j}x^{j-N+l}(1-x)^{N-j}  q_{j}(X)  \\
    &= a_0 + a_1x + a_2x^2 + \dots + a_lx^l ,   
\end{align}
where the coefficients $a_0,\dots,a_l$ of $f_{X}^{(l)}(x)$ are given by the combinations of $q_{j}(X)$ for $j \in [N-l, N]$.

Given that $c_l$ is sufficiently larger than $c_{\text{max}}$, the $l$-degree polynomial $f_{X}^{(l)}(x)$ can be well-estimated for $x \le x^{*}$ via the oracle access to $|\text{PGPE}|_{\pm}^{2}$.
Specifically, this can be done by first querying $q(x,X)$ value (which is close to $q^{(l)}(x,X)$ value as given in Lemma~\ref{lemma:distancebytruncationissmall}) and then multiplying $x^{-N+l}$. 
Also, one can easily check that $f_{X}^{(l)}(x)$ exhibits the desired value $|\Per(X)|^2$ at $x=1$, i.e., $f_{X}^{(l)}(1) = q_{N}(X) = |\Per(X)|^2$ as given in Eq.~\eqref{fixedpdoutputprobability}.
Hence, using the well-estimated values of $f_{X}^{(l)}(x)$ corresponding to $l+1$ number of $x \le x^{*}$ values, we can infer the desired value $f_{X}^{(l)}(1)$ via polynomial interpolation.
This overall process is described in Fig.~\ref{fig:proofoutline}.



\subsection{Establishing reduction from $|\text{GPE}|_{\pm}^{2}$ to $|\text{PGPE}|_{\pm}^{2}$}\label{proofoftheorem1}

Given the $l$-degree polynomial $f_{X}^{(l)}(x)$ as introduced in the previous section, we are now ready to prove Theorem~\ref{logpdhardness}.
Specifically, we establish the BPP-reduction from $|\text{GPE}|_{\pm}^{2}$ to $|\text{PGPE}|_{\pm}^{2}$ for $k_{\rm{min}} = c_{\rm{min}}\log N$, with the error parameters given by $\epsilon =  O\left(\epsilon_0^{7.1}\delta_0^{6.1} N^{-40 c_{\text{min}}}\right)$  and $\delta =  O\left(\frac{\delta_0}{\log N + \log\epsilon_0^{-1}\delta_0^{-1}}\right)$.

\begin{proof}[Proof of Theorem~\ref{logpdhardness}]

Let $\mathcal{O}$ be the oracle introduced in Theorem~\ref{logpdhardness}, i.e., on input $x \le 1-\frac{k_{\text{min}}}{N}$ and $X \sim \mathcal{N}(0,1)_{\mathbb{C}}^{N\times N}$, estimates $q(x, X)$ to within $\epsilon N!$ over $1-\delta$ of $X$ such that 
\begin{equation}
    \Pr_{X}\left[|\mathcal{O}(x,X) - q(x, X)| > \epsilon N! \right] < \delta .
\end{equation}
In the following, we show that given access to the oracle $\mathcal{O}$ one can obtain the desired value $p(1,X) = |\Per(X)|^2$ to within $\epsilon_0 N!$ over $1-\delta_0$ of $X$ (i.e., solve $|\rm{GPE}|_{\pm}^{2}$ problem), for $\epsilon$ and $\delta$ given in Theorem~\ref{logpdhardness}. 

Let us define $\epsilon' \coloneq \epsilon + \epsilon_1$ for $\epsilon_1$ given in Eq.~\eqref{epsilon1}.
By Lemma~\ref{lemma:distancebytruncationissmall}, triangular inequality, and union bound, we have
\begin{align}
    &\Pr_{X}\left[|\mathcal{O}(x,X) - q^{(l)}(x,X)| > \epsilon'N! \right] \nonumber \\
    &\le \Pr_{X}\left[|\mathcal{O}(x,X) - q(x, X)| > \epsilon N! \right] \nonumber \\
    &+ \Pr_{X}\left[|q(x, X) - q^{(l)}(x,X)| > \epsilon_1 N! \right] \\
    &< 2\delta , \label{truncatedprobabilityestimation}
\end{align}
which implies that $\mathcal{O}$ can also well estimate $q^{(l)}(x,X)$ to within $\epsilon'N!$ with probability at least $1 - 2\delta$ over $X$.

As we previously defined, $x_{\text{min}}$ is a minimum value of $x$ we set, such that $x_{\text{min}} = 1 - \frac{k_{\text{max}}}{N}$ with $k_{\text{max}} = c_{\text{max}}\log N$.
Let us define $\epsilon'' \coloneq  \epsilon'N^{c_{\text{max}}}$, such that $\epsilon'x^{-N+l} \le \epsilon'x_{\text{min}}^{-N+l} = \epsilon'e^{k_{\text{max}} - \tilde{\mathcal{O}}(N^{-1})} \le \epsilon''$ for all possible $x$.
Then the following inequality holds:
\begin{align}
    &\Pr_{X}\left[|\mathcal{O}(x,X)x^{-N+l} - f_{X}^{(l)}(x)| > \epsilon''N! \right] \nonumber \\
    &\le \Pr_{X}\left[|\mathcal{O}(x,X)x^{-N+l} - f_{X}^{(l)}(x)| > \epsilon'x^{-N+l}N! \right] \\
    &= \Pr_{X}\left[|\mathcal{O}(x,X) - q^{(l)}(x,X)| > \epsilon'N! \right] \\
    &< 2\delta, \label{estimatelowdegreepolynomial}
\end{align}
by using Eq.~\eqref{truncatedprobabilityestimation}.

Accordingly, for each access to the oracle $\mathcal{O}$ on input a $x \le 1-\frac{k_{\text{min}}}{N}$ and $X \sim \mathcal{N}(0,1)_{\mathbb{C}}^{N\times N}$, one can estimate $f_{X}^{(l)}(x)$ to within $\epsilon''N!$ with probability at least $1 - O(\delta)$ over $X$. 
The remaining problem is to infer the desired value $f_{X}^{(l)}(1)$ from the estimated values $f_{X}^{(l)}(x)$.
Let $\{k_i\}_{i=1}^{l+1}$ be the set of $l+1$ equally spaced points in the interval $k_i\in[k_{\text{min}}, k_{\text{max}}]$.
For each $x_i = 1 - \frac{k_i}{N}$, let $y_i = \mathcal{O}(x_i,X)x_i^{-N+l}$.
By Eq.~\eqref{estimatelowdegreepolynomial}, each set of points $(x_i, y_i)$ satisfies
\begin{equation}
    \Pr\left[|y_i - f_{X}^{(l)}(x_i)| > \epsilon''N! \right] < 2\delta.
\end{equation}
For simplicity, we change the $l$-degree polynomial to $g(x) = f_{X}^{(l)}(z(x))$ for a linear function $z(x) = \frac{k_{\text{max}}+k_{\text{min}}}{2N}(x-1) + 1$, to rescale the input variable $x$.
After rescaling the variable, we have the input $\{x_i\}_{i=1}^{l+1}$ which is the set of equally spaced points in the interval $x_i \in [-\Delta,\Delta]$ with a constant $\Delta \coloneq \frac{k_{\text{max}}-k_{\text{min}}}{k_{\text{max}}+k_{\text{min}}} = \frac{c_{\text{max}}-c_{\text{min}}}{c_{\text{max}}+c_{\text{min}}}$.
Likewise, for each input $x_i$, we have an estimator $y_i =  \mathcal{O}(z(x_i),X)z(x_i)^{-N+l}$ for $g(x_i)$ such that 
\begin{equation}
    \Pr\left[|y_i - g(x_i)| > \epsilon''N! \right] < 2\delta.
\end{equation}
Using those estimation values, we infer the value $g(1) = f_{X}^{(l)}(1) = |\Per(X)|^2$ via the polynomial interpolation technique, as we have described previously.

To infer the value $g(1)$ we use the Lagrange interpolation technique for $g(x)$; to do so, it should be promised that all the $(x_i, y_i)$ points satisfy $|y_i - g(x_i)| \le \epsilon''N!$.
Here, by simply using the union bound, the probability that all the $l+1$ number of $y_i$ points are $\epsilon''N!$-close to $g(x_i)$ is at least $(1-2\delta)^{l+1}$. 
Given that all the $l+1$ points are successful, we can use the Lagrange interpolation technique, whose error's upper bound has been shown in Ref.~\cite{kondo2022quantum} as follows. 

\begin{lemma}[Kondo et al~\cite{kondo2022quantum}]
Let $h(x)$ be a polynomial of degree at most $l$, Let $\Delta\in(0,1)$. Assume that $|h(x_j)|\le \epsilon$ for all of the $l+1$ equally-spaced points $x_j = -\Delta + \frac{2j}{l}\Delta$ for $j = 0,\dots,l$. Then
\begin{equation}\label{interpolationerror}
    |h(1)| < \epsilon\frac{\exp[l(1+\log\Delta^{-1})]}{\sqrt{2\pi l}}.
\end{equation}
\end{lemma}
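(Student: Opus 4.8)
The plan is to use the Lagrange interpolation formula through the $l+1$ nodes and to bound the resulting extrapolation coefficients at the point $x=1$. Writing $\ell_j(x) = \prod_{m\neq j}\frac{x-x_m}{x_j-x_m}$ for the Lagrange basis polynomials and using $\deg h \le l$, we have $h(1) = \sum_{j=0}^l h(x_j)\,\ell_j(1)$, so the assumption $|h(x_j)|\le\epsilon$ together with the triangle inequality immediately gives $|h(1)| \le \epsilon\sum_{j=0}^l|\ell_j(1)|$. The whole task then reduces to bounding the extrapolation Lebesgue sum $\Lambda \coloneq \sum_{j=0}^l|\ell_j(1)|$ by $\exp[l(1+\log\Delta^{-1})]/\sqrt{2\pi l}$.

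First I would evaluate the denominators exactly. Since the nodes are equally spaced with spacing $d = 2\Delta/l$, one has $x_j - x_m = (j-m)d$, hence $|\prod_{m\neq j}(x_j-x_m)| = d^l\, j!\,(l-j)! = (2\Delta/l)^l j!(l-j)!$. Setting $b_m \coloneq 1-x_m > 0$ (all positive because $\Delta<1$) and $P\coloneq\prod_{m=0}^l b_m$, the numerator of $\ell_j(1)$ is $\prod_{m\neq j}b_m = P/b_j$, so
\[
\Lambda = \frac{P\,(l/2\Delta)^l}{l!}\sum_{j=0}^l\frac{\binom{l}{j}}{b_j} = \frac{(l/2\Delta)^l}{l!}\sum_{j=0}^l\binom{l}{j}\prod_{m\neq j}b_m .
\]

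The crucial step --- and the one I expect to be the main obstacle --- is the combinatorial inequality $\sum_{j=0}^l\binom{l}{j}\prod_{m\neq j}b_m \le 2^l$. The observation that makes it work is that the nodes are symmetric about the origin, $x_{l-j}=-x_j$, so that $b_j b_{l-j} = (1-x_j)(1+x_j) = 1-x_j^2 \le 1$ while $b_j + b_{l-j} = 2$. Pairing the index $j$ with $l-j$ and using $\binom{l}{j}=\binom{l}{l-j}$, each pair contributes $\binom{l}{j}\big(\prod_{m\neq j}b_m + \prod_{m\neq l-j}b_m\big) = 2\binom{l}{j}\prod_{m\neq j,\,l-j}b_m \le 2\binom{l}{j}$, because the remaining product again consists of factors $1-x_m^2\le 1$. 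Summing over the pairs (and treating the central node $x_{l/2}=0$, where $b_{l/2}=1$, separately when $l$ is even) then sums exactly to $\sum_{j=0}^l\binom{l}{j}=2^l$.

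Finally I would combine the pieces: the inequality above gives $\Lambda \le (l/2\Delta)^l\,2^l/l! = (l/\Delta)^l/l!$, and Stirling's bound $l! > \sqrt{2\pi l}\,(l/e)^l$ converts this into $\Lambda < (e/\Delta)^l/\sqrt{2\pi l} = \exp[l(1+\log\Delta^{-1})]/\sqrt{2\pi l}$, which yields the claimed strict bound on $|h(1)|$. The only genuinely delicate point is extracting the clean constant $(e/\Delta)^l$ rather than a lossy $(2e/\Delta)^l$; this is exactly what the symmetric pairing buys, since the naive numerator bound $\prod_{m\neq j}|1-x_m|\le (1+\Delta)^l\le 2^l$ would overshoot by a factor of order $2^l$.
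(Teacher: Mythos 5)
Your argument is correct, and it is worth noting that the paper itself does not prove this statement: it is imported verbatim from Kondo et al.\ \cite{kondo2022quantum}, so there is no in-paper proof to compare against. Your derivation is a sound, self-contained reconstruction along the standard lines: Lagrange extrapolation $h(1)=\sum_j h(x_j)\ell_j(1)$, the exact evaluation of the denominators $\prod_{m\neq j}|x_j-x_m|=(2\Delta/l)^l\,j!\,(l-j)!$ for equispaced nodes, and a bound on the extrapolation Lebesgue sum $\Lambda=\sum_j|\ell_j(1)|$. The one step that carries real content is your combinatorial inequality $\sum_j\binom{l}{j}\prod_{m\neq j}(1-x_m)\le 2^l$, and your symmetric-pairing argument for it is valid: since $x_{l-j}=-x_j$, the set of indices complementary to $\{j,l-j\}$ is itself closed under $m\mapsto l-m$, so the residual product factors into terms $1-x_m^2\le 1$ (with $b_{l/2}=1$ at the fixed point), while $b_j+b_{l-j}=2$ supplies exactly the factor needed to reach $\sum_j\binom{l}{j}=2^l$. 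Combined with $\Lambda\le (l/\Delta)^l/l!$ and the strict Stirling lower bound $l!>\sqrt{2\pi l}\,(l/e)^l$, this gives precisely the constant $e^{l(1+\log\Delta^{-1})}/\sqrt{2\pi l}$ of Eq.~\eqref{interpolationerror}; as you observe, the naive numerator bound $(1+\Delta)^l\le 2^l$ would only yield $(2e/\Delta)^l$, so the pairing is genuinely necessary for the stated constant. Two cosmetic remarks: the phrase ``sums exactly to $2^l$'' should read ``is bounded by $2^l$'' (equality holds only for $l=1$), and the strict inequality in the conclusion requires $\epsilon>0$, which is implicit in the lemma's use.
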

By drawing $l$-degree polynomial $g_e(x)$ according to $(x_i,y_i)$ points, we get a polynomial $h(x) \coloneq g_e(x) - g(x)$.
Then, $|h(x_i)| \le \epsilon''N!$ for $x_i$ satisfying $|y_i - g(x_i)| \le \epsilon''N!$.
Therefore, given the error bound as in Eq.~\eqref{interpolationerror}, we can obtain an estimator $g_e(1)$ for $g(1)$ satisfying $|g_e(1) - g(1)| < \epsilon''\frac{\exp[l(1+\log\Delta^{-1})]}{\sqrt{2\pi l}}N! < \epsilon''e^{l(1+\log\Delta^{-1})}N!$ via Lagrange interpolation whenever all the $l+1$ points are successful, whose probability is at least $(1-2\delta)^{l+1}$. 
In other words, we can estimate the desired value $g(1) = |\Per(X)|^2$ as
\begin{align}\label{estimationbyinterpolation}
\begin{split}
    \Pr\left[|g_e(1) - |\Per(X)|^2| > \epsilon''e^{l(1+\log\Delta^{-1})}N! \right] \\
    < 1 - (1-2\delta)^{l+1}.
\end{split}
\end{align}
Therefore, to solve $|\text{GPE}|_{\pm}^{2}$ with Eq.~\eqref{estimationbyinterpolation}, the error parameters $\epsilon$ and $\delta$ should satisfy the conditions $\epsilon_0 \ge \epsilon''e^{l(1+\log\Delta^{-1})} =  \epsilon''N^{c_l(1+\log\Delta^{-1})}$ and $\delta_0 \ge 1 - (1-2\delta)^{l+1}$. 
Accordingly, we set $\delta$ satisfying $\delta_0 = 1 - (1-2\delta)^{l+1}$.
Also, combining the results obtained so far, we get
\begin{align}
    \epsilon_0 \ge N^{c_{\text{max}} + c_l(1+\log\Delta^{-1})}\left(\epsilon + \epsilon_1 \right) ,
\end{align}
which requires $\epsilon$ to satisfy the following condition to solve $|\text{GPE}|_{\pm}^{2}$ problem:
\begin{align}\label{relationbtnepsilonandepsilon0}
    \epsilon \le \epsilon_0N^{-c_{\text{max}}-c_l(1+\log\Delta^{-1})} - \epsilon_1,
\end{align}
for $\epsilon_1 = \delta^{-1}N^{-\frac{c_{\text{max}}}{3}\left(\frac{c_l}{c_{\text{max}}} - 1 \right)^2 }$ as given in Lemma~\ref{lemma:distancebytruncationissmall}.

Since the error parameter is positive such that $\epsilon > 0$, one should make sure that $\epsilon_0N^{-c_{\text{max}} -c_l(1+\log\Delta^{-1})}$ is the leading term in the right-hand side of Eq.~\eqref{relationbtnepsilonandepsilon0} as the system size scales. 
To do so, note that we have the freedom to choose the value $c_l$ and $c_{\text{max}}$ for given $c_{\text{min}}$.
Specifically, by setting $c_l$ sufficiently larger than $c_{\text{max}}$, we can make $\epsilon_1$ arbitrarily smaller than $\epsilon_0N^{-c_{\text{max}} -c_l(1+\log\Delta^{-1})}$. 
On the other hand, arbitrarily increasing $c_l$ results in arbitrarily small approximation error $\epsilon$ in Eq.~\eqref{relationbtnepsilonandepsilon0}, because $\epsilon$ scales $\sim N^{-c_l(1+\log\Delta^{-1})}$ with $c_l$.
Therefore, we need to find an \textit{intermediate} size of $c_l$, to (i) make $\epsilon_1$ small enough compared to $\epsilon_0N^{-c_{\text{max}} -c_l(1+\log\Delta^{-1})}$ and (ii) prevent an excessive decrease in $\epsilon$.

Considering all those arguments, we investigate the appropriate size of the parameters $c_{\text{max}}$ and $c_l$ in Appendix~\ref{determiningcmaxandcl}.
More specifically, we set $c_{\text{max}}$ and $c_l$ as 
\begin{align}
    &c_{\text{max}} = \kappa c_{\text{min}} , \label{cmax} \\
    &c_l =  \kappa \lambda \frac{5+3\log\frac{\kappa + 1}{\kappa - 1}}{2} c_{\text{min}} + \frac{3\log\epsilon_0^{-1}\delta_0^{-1}}{\log N} ,
\end{align}
with $\kappa = 2.108$ and $\lambda = 2.149$.
Using these values, we find in Appendix~\ref{determiningcmaxandcl} that the right-hand side of Eq.~\eqref{relationbtnepsilonandepsilon0} becomes
\begin{align}
     &\epsilon_0N^{-c_{\text{max}}-c_l(1+\log\Delta^{-1})} - \epsilon_1 \nonumber \\
     &= O\left( \epsilon_0N^{-c_{\text{max}}-c_l(1+\log\Delta^{-1})} \right)\\
     &= O\left( \epsilon_0^{7.094}\delta_0^{6.094} N^{-39.35 c_{\text{min}}} \right) ,
\end{align}
where we used $\Delta = \frac{c_{\text{max}}-c_{\text{min}}}{c_{\text{max}}+c_{\text{min}}} = \frac{\kappa - 1}{\kappa + 1}$. 
Hence, by setting $c_{\text{max}}$ and $c_l$ as the above, the condition for the error parameter $\epsilon$ in Eq.~\eqref{relationbtnepsilonandepsilon0} can be expressed as 
\begin{align}
    \epsilon \le O\left( \epsilon_0^{7.094}\delta_0^{6.094} N^{-39.35 c_{\text{min}}} \right). 
\end{align}

To sum up, by setting the error parameters $\epsilon$ and $\delta$ as $\delta = 1 - (1-2\delta)^{l+1} = O(\delta_0l^{-1}) = O\left(\frac{\delta_0}{\log N + \log\epsilon_0^{-1}\delta_0^{-1}}\right)$ and 
\begin{align}
\begin{split}
\epsilon &= O\left(\epsilon_0^{7.1}\delta_0^{6.1} N^{-40c_{\text{min}}}\right) ,
\end{split}
\end{align}
one can probabilistically estimate $|\Per(X)|^2$ to within $\epsilon_0N!$ with probability at least $1 - \delta_0$ over $X$ given access to the oracle $\mathcal{O}$.
This completes the proof.

\end{proof}

\section{Classical intractability for partially distinguishable and lossy boson sampling}\label{section: generalization}

Indeed, photon loss and partial distinguishability of photons are typically regarded as the most crucial noise sources in boson sampling experiments~\cite{zhong2020quantum, zhong2021phase, madsen2022quantum, deng2023gaussian}.
Accordingly, in this section we discuss whether the classical intractability of boson sampling can be maintained when $both$ photon loss noise and partial distinguishability noise are applied.

For the photon loss noise, we consider the conventional beamsplitter loss channel described in Refs.~\cite{oszmaniec2018classical, garcia2019simulating, qi2020regimes} where the degree of noise is characterized by a uniform transmission rate $\eta \in [0,1]$ of the entire circuit.
\cor{Let us denote $n$ as the output photon number after photon loss, such that $n \le N$ for the input photon number $N$. }
When the uniform transmission rate $\eta$ is applied for each of $N$ input photons, the probability to observe $n$ photon outcomes follows a binomial distribution $\binom{N}{n}\eta^n(1-\eta)^{N-n}$.

Now, let $q(x,\eta,X)$ be a (rescaled) output probability of obtaining the first $n$ modes outcome over $M$ modes, \cor{where the rescaling factor is given by $M^{n}$.}
Here, $X$ is the first $n \times N$ submatrix of a rescaled $M$ by $M$ unitary matrix $\sqrt{M}U$, which can be approximated to the random Gaussian matrix $X \sim \mathcal{N}(0,1)_{\mathbb{C}}^{n \times N} $ for $M$ by $M$ Haar random unitary matrix $U$ given that $M = \Omega(N^{\gamma})$ for a sufficiently large $\gamma$.
Then, as proposed in Ref.~\cite{moylett2019classically}, $q(x,\eta,X)$ can be represented as
\begin{align}\label{outputprobabilityforlossanddistinguishability}
\begin{split}
    q(x,\eta,X) &= \eta^n(1-\eta)^{N-n} \sum_{\substack{K \subseteq [N]\\|K|=n}} \sum_{j=1}^{n}x^j(1-x)^{N-j} \\
    &\times \sum_{\substack{I \subseteq K \\|I|=j}} \sum_{\substack{J \subseteq K\\|J|=j}} |\Per((X_{K})_{I,J})|^2 \Per(|(X_{K})_{\bar{I},\bar{J}}|^2) \\
    & = \binom{N}{n}\eta^n(1-\eta)^{N-n}  \cdot \binom{N}{n}^{-1}\sum_{\substack{K \subseteq [N]\\|K|=n}} q(x, X_{K}) ,
\end{split}
\end{align}
where $X_{K}$ is an $n$ by $n$ matrix defined by taking columns of $X$ according to $K$.
Also, $q(x, X_{K})$ is the partially distinguishable output probability given in Eq.~\eqref{noisyoutputprobability}.

\cor{

Here, let us remark that we consider the general output probability for an $n$-photon outcome, in the sense that the binomial factor $\binom{N}{n}\eta^n(1-\eta)^{N-n}$ is included in the output probability on the right-hand side of Eq.~\eqref{outputprobabilityforlossanddistinguishability}.
Note that when $n$, $N$, and $\eta$ satisfies $\binom{N}{n}\eta^n(1-\eta)^{N-n} = \Omega( \text{poly}(n)^{-1})$, one can instead consider a noisy boson sampler with a fixed output photon number $n$, which can be implemented by repeating the sampling polynomial number of times and post-selecting $n$-photon outcomes.
In this case, the binomial factor $\binom{N}{n}\eta^n(1-\eta)^{N-n}$ in the right-hand side of Eq.~\eqref{outputprobabilityforlossanddistinguishability} vanishes, yielding a conditional output probability that is independent of $\eta$.
However, since our analysis of the condition on $\eta$ for the hardness result requires an explicit dependence of the output probability on $\eta$, we use the general output probability in Eq.~\eqref{outputprobabilityforlossanddistinguishability} rather than the conditional output probability.

}

To examine the classical intractability of simulating noisy boson sampling with loss and partial distinguishability, we follow the approach in Section~\ref{section:problemset}.
That is, we similarly set the problem corresponding to the average-case estimation of the noisy output probability $q(x,\eta,X)$ over $X \sim \mathcal{N}(0,1)_{\mathbb{C}}^{n \times N}$, which can be formally stated as follows.

\begin{problem}[$\sum|\text{PGPE}|_{\pm}^{2}$]
Given as input a transmission rate $\eta \in [0, \eta^{*}]$, an indistinguishability rate $x \in [0, x^{*}]$, and a matrix $X \sim \mathcal{N}(0,1)_{\mathbb{C}}^{n \times N}$ of i.i.d. Gaussians with $N \ge n$, together with error bounds $\epsilon,\,\delta > 0$, estimate $q(x,\eta,X)$ to within additive error $\pm \epsilon\cdot n!$ with probability at least $1-\delta$ over $X$ in $\text{poly}(n,\epsilon^{-1},\delta^{-1})$ time. 
\end{problem}

\cor{Similarly to $x^{*}$, $\eta^{*}$ is a fixed transmission rate of the noisy boson sampling, such that $1 - x^{*}$ and $1 - \eta^{*}$ characterize the noise rate for partial distinguishability and photon loss noises, respectively.}
Also, to avoid any confusion, we note that now the problem size is given by the output photon number $n$, 
while we set the input photon number $N$ as a free parameter that we can arbitrarily choose, as long as $n \leq N \ll M$.

\cor{

As we have previously argued in Section~\ref{section:problemset}, \#P-hardness of $\sum|\text{PGPE}|_{\pm}^{2}$ problem serves as a sufficient condition for the classical intractability of simulating the noisy boson sampling subject to loss and distinguishability. 
Specifically, on input a Haar-random unitary matrix $U$, $\eta \in [0, \eta^{*}]$ and $x \in [0, x^{*}]$, the output probability of the noisy boson sampling for the first $n$ modes outcome can be represented as $q(x,\eta, X)$ in Eq.~\eqref{outputprobabilityforlossanddistinguishability} (rescaled by $M^{n}$), for $X \sim \mathcal{N}(0,1)_{\mathbb{C}}^{n \times N}$.
Then, considering a randomized classical sampler $\mathcal{S}$ that simulates the noisy boson sampling to within total variation distance $\beta$, with $\eta^{*}$ and $x^{*}$ characterizing the noise rate for loss and partial distinguishability, $\sum|\text{PGPE}|_{\pm}^{2}$ is in complexity class $\rm{BPP}^{NP^{\mathcal{S}}}$ for $\beta = O(\epsilon\delta)$ by Stockmeyer's reduction~\cite{stockmeyer1985approximation}.
Therefore, showing the \#P-hardness of $\sum|\text{PGPE}|_{\pm}^{2}$ implies that there cannot be any efficient classical sampler $\mathcal{S}$ for the approximate noisy boson sampling, unless the polynomial hierarchy collapses.
We note that this hardness result requires that $n$, $N$, and $\eta$ satisfy $\binom{N}{n}\eta^n(1-\eta)^{N-n} = \Omega( \text{poly}(n)^{-1})$, ensuring that $n$ photon outcomes that we are interested in occupy a sufficiently large portion of probability weight over all possible outcomes. 

}

Following our main strategy, to show the \#P-hardness of $\sum|\text{PGPE}|_{\pm}^{2}$, one way is to find whether $\sum|\text{PGPE}|_{\pm}^{2}$ can be reduced from $|\text{GPE}|_{\pm}^{2}$, showing that solving $\sum|\text{PGPE}|_{\pm}^{2}$ is at least \textit{as hard as} solving $|\text{GPE}|_{\pm}^{2}$.
More precisely, we investigate whether average-case estimation of $q(n,x,X)$ is polynomially reducible from the average-case estimation of ideal output probability $q(X_{K_0}) = |\Per(X_{K_0})|^2$, for some fixed subset of modes $K_0 \subseteq [N]$ with $|K_0| = n$, corresponding to the $|\text{GPE}|_{\pm}^{2}$ problem with input $X_{K_0} \sim \mathcal{N}(0,1)_{\mathbb{C}}^{n\times n}$.
If this reduction can be established, it implies that such a noisy boson sampling is equivalently hard to classically simulate as the ideal boson sampling. 
\cor{Since we have already established the reduction from $|\text{GPE}|_{\pm}^{2}$ to $|\text{PGPE}|_{\pm}^{2}$ for $O(\log n)$ number of distinguishable photons in Sec.~\ref{Section:reduction}, the remaining problem is to establish the reduction from $|\text{PGPE}|_{\pm}^{2}$ to $\sum|\text{PGPE}|_{\pm}^{2}$, i.e., estimate $q(x,X_{K_0})$ in Eq.~\eqref{noisyoutputprobability} for a fixed $K_0$ from estimated values of $q(x,\eta,X)$ in Eq.~\eqref{outputprobabilityforlossanddistinguishability}.

To this end, a natural approach is to set $n$ close to the mean output photon number (such that $N \simeq n/\eta$), which always satisfies the requirement $\binom{N}{n}\eta^n(1-\eta)^{N-n} = \Omega(\text{poly}(n)^{-1})$ for classical hardness.
In this case, one can employ the previous result in Ref.~\cite{aaronson2016bosonsampling} for lossy boson sampling.
Specifically, for $X \sim \mathcal{N}(0,1)_{\mathbb{C}}^{n \times N}$, Ref.~\cite{aaronson2016bosonsampling} has established a reduction from the average-case estimation of $q(X_{K_0})= |\Per(X_{K_0})|^2$ to the average-case estimation of $\binom{N}{n}^{-1}\sum_{K} q(X_{K})$ (where $K \subseteq [N]$ with $|K| = n$) for $N = n + k'$, with imprecision blowup by at most $\sim n^{O(k')}$
Hence, it is polynomially reducible when $k' = O(1)$, denoting that a constant number of lost photons maintains the computational complexity of lossy boson sampling.
By extending their techniques to our case and setting $N \simeq n/\eta$, one can similarly establish a reduction from the average-case extimation of $q(x,X_{K_0})$ to the average-case estimation of $q(x,\eta,X)$, with imprecision blowup by at most $\sim n^{O(k')}$, where now $k' \simeq (1-\eta)n/\eta$ denotes the average number of lost photons for transmission rate $\eta$.
Here, the imprecision blowup is minimized when $\eta = \eta^{*}$, because $(1-\eta)n/\eta \geq (1 - \eta^{*})n/\eta^{*}$. 
Therefore, given that $(1 - \eta^{*})n/\eta^{*} = O(1)$, which implies a constant number of photon loss on average, $\sum|\text{PGPE}|_{\pm}^{2}$ is polynomially reducible from $|\text{PGPE}|_{\pm}^{2}$. 
And combining our result, $\sum|\text{PGPE}|_{\pm}^{2}$ is polynomially reducible from $|\text{GPE}|_{\pm}^{2}$, for $O(1)$ number of loss and $O(\log n)$ number of distinguishable photons on average. 
}

\cor{
We remark that there's a much simpler way, as also noted in Ref.~\cite{aaronson2016bosonsampling}.
That is, one can instead set $n = N$, corresponding to the \textit{lossless} output probability.
More specifically, from Eq.~\eqref{outputprobabilityforlossanddistinguishability}, the output probability for $n = N$ is given by
\begin{align}
     q(x,\eta,X) = \eta^{N}  q(x, X),
\end{align}
for $X \sim \mathcal{N}(0,1)_{\mathbb{C}}^{N \times N}$, and $q(x,X)$ is the partially distinguishable output probability in Eq.~\eqref{noisyoutputprobability}.
Hence, one can directly estimate $q(x, X)$ by estimating $q(x,\eta,X)$, but with a multiplicative imprecision blowup by $\eta^{-N}$.
Observe that, $\eta^{-N}$ is minimized when $\eta = \eta^{*}$, and when $\eta^{*}$ satisfies $(1-\eta^{*})n = O(\log n)$, we have $(\eta^{*})^{-N} = e^{O(\log n)} = \text{poly}(n)$.
Indeed, this indicates $(1-\eta^{*})n/\eta^{*} =   O(\log n / \eta^{*}) =  O(\log n)$ number of photon loss on average.
Therefore, under $O(\log n)$ number of loss, $\sum|\text{PGPE}|_{\pm}^{2}$ is polynomially reducible from $|\text{PGPE}|_{\pm}^{2}$, with the required imprecision level decreases multiplicatively by $\text{poly}(n)$ during the reduction.
Combining this argument with Theorem~\ref{logpdhardness}, for on average $O(\log n)$ number of loss and $O(\log n)$ number of distinguishability, $\sum|\text{PGPE}|_{\pm}^{2}$ is polynomially reducible from $|\text{GPE}|_{\pm}^{2}$, given that $\epsilon, \delta = O({\rm{poly}}(n, \epsilon_0^{-1}, \delta_0^{-1})^{-1})$.
Although this simple approach can yield a stronger hardness result, it cannot be extended to higher loss rates, since $(\eta^{*})^{-N}$ becomes quasi-polynomial in $n$ outside the log-loss regime.
Therefore, this exhibits clear limitations beyond the log-loss regime, and establishing hardness results under higher loss rates necessitates fundamentally different techniques.

We now conclude this section by formally summarizing the arguments so far.

}

\begin{corollary}
If $\mathcal{O}$ is an oracle that solves $\sum|\rm{PGPE}|_{\pm}^{2}$ for $(1-\eta^{*})n = O(\log n)$ and $(1-x^{*})n = O(\log n)$, together with $\epsilon, \delta = O({\rm{poly}}(n, \epsilon_0^{-1}, \delta_0^{-1})^{-1})$, then $|\rm{GPE}|_{\pm}^{2}$ can be solved in $\rm{BPP}^{\mathcal{O}}$. 
\end{corollary}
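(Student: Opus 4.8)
The plan is to prove the corollary by composing two reductions along the chain $|\text{GPE}|_{\pm}^{2} \to |\text{PGPE}|_{\pm}^{2} \to \sum|\text{PGPE}|_{\pm}^{2}$. The first arrow is exactly Theorem~\ref{logpdhardness}, which already yields a $\text{BPP}$ reduction from $|\text{GPE}|_{\pm}^{2}$ to $|\text{PGPE}|_{\pm}^{2}$ whenever the average number of distinguishable photons is $O(\log n)$, provided the $|\text{PGPE}|_{\pm}^{2}$ oracle is accurate to inverse-polynomial $\epsilon$ and $\delta$. Hence the only new ingredient is a reduction from $|\text{PGPE}|_{\pm}^{2}$ to $\sum|\text{PGPE}|_{\pm}^{2}$ that costs at most a polynomial blowup in the required precision. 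I would obtain this reduction through the lossless specialization $n = N$ already identified above, since the corollary's loss budget $(1-\eta^{*})n = O(\log n)$ matches precisely the regime where this route stays polynomial.

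Concretely, first I would set $n = N$ in the definition of $\sum|\text{PGPE}|_{\pm}^{2}$, so that the binomial sum in Eq.~\eqref{outputprobabilityforlossanddistinguishability} collapses and the output probability factorizes as $q(x,\eta,X) = \eta^{N} q(x,X)$ for $X \sim \mathcal{N}(0,1)_{\mathbb{C}}^{N\times N}$, with $q(x,X)$ the partially distinguishable probability of Eq.~\eqref{noisyoutputprobability}. Given the oracle $\mathcal{O}$ for $\sum|\text{PGPE}|_{\pm}^{2}$, I would query it at the fixed transmission $\eta = \eta^{*}$ and at the desired $x \in [0,x^{*}]$, then multiply the returned estimate by $(\eta^{*})^{-N}$ to produce an estimator for $q(x,X)$. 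This converts an additive $\epsilon\,n!$ guarantee on $q(x,\eta^{*},X)$ into an additive $(\eta^{*})^{-N}\epsilon\,n!$ guarantee on $q(x,X)$, while the failure probability $\delta$ over $X$ is unaffected by the deterministic rescaling. The decisive estimate is $(\eta^{*})^{-N} = e^{N\log(1/\eta^{*})}$; writing $\eta^{*} = 1 - k_{\eta}/N$ with $k_{\eta} = (1-\eta^{*})N = O(\log N)$, one has $\log(1/\eta^{*}) = k_{\eta}/N + O((k_{\eta}/N)^{2})$, so $(\eta^{*})^{-N} = e^{k_{\eta}(1+o(1))} = \text{poly}(N)$. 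Thus the rescaling inflates the required precision by at most a polynomial factor, establishing the polynomial reduction from $|\text{PGPE}|_{\pm}^{2}$ to $\sum|\text{PGPE}|_{\pm}^{2}$.

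Finally I would compose the two reductions and track the error parameters. Theorem~\ref{logpdhardness} demands a $|\text{PGPE}|_{\pm}^{2}$ oracle with $\epsilon,\delta = O(\text{poly}(N,\epsilon_0^{-1},\delta_0^{-1})^{-1})$; the previous paragraph supplies exactly such an oracle from $\mathcal{O}$ provided $\mathcal{O}$ itself is accurate to $\epsilon,\delta = O(\text{poly}(n,\epsilon_0^{-1},\delta_0^{-1})^{-1})$, since the only loss in precision is the multiplicative factor $(\eta^{*})^{-N} = \text{poly}(N) = \text{poly}(n)$ (recall $n = N$ here). These are precisely the hypotheses assumed of $\mathcal{O}$ in the corollary. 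Feeding the resulting $|\text{PGPE}|_{\pm}^{2}$ solver into Theorem~\ref{logpdhardness} then solves $|\text{GPE}|_{\pm}^{2}$ in $\text{BPP}^{\mathcal{O}}$, as claimed.

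I expect the main obstacle to be purely the precision bookkeeping rather than any new structural idea: I must verify that the polynomial blowup $(\eta^{*})^{-N}$ from the loss step and the interpolation blowup internal to Theorem~\ref{logpdhardness} remain simultaneously polynomial, so that the composed requirement on $\mathcal{O}$ stays inverse-polynomial in $n,\epsilon_0^{-1},\delta_0^{-1}$. This is exactly where the $O(\log n)$ scaling of both noise types is essential, as it is the largest rate at which $(\eta^{*})^{-N}$ stays polynomial, and the same logarithmic window keeps the interpolation degree $l = O(\log n)$ and hence its imprecision amplification polynomial. A secondary point to confirm is that fixing $\eta = \eta^{*}$ while sweeping $x$ over $[x_{\min},x^{*}]$ is a legal use of $\mathcal{O}$, which it is because $\mathcal{O}$ accepts both $\eta$ and $x$ as free inputs, so the multiple interpolation queries required by Theorem~\ref{logpdhardness} are all answerable.
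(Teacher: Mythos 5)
Your proposal is correct and follows essentially the same route as the paper: the paper also obtains the $O(\log n)$ loss tolerance by specializing to the lossless outcome $n=N$, so that $q(x,\eta,X)=\eta^{N}q(x,X)$ and the oracle's estimate need only be rescaled by $(\eta^{*})^{-N}=e^{O(\log n)}=\text{poly}(n)$, and then composes this with Theorem~\ref{logpdhardness}. Your error bookkeeping (polynomial multiplicative blowup, unchanged failure probability $\delta$, legality of sweeping $x$ at fixed $\eta=\eta^{*}$) matches the paper's argument.
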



\section{Concluding remarks}\label{section: conclusion}

In this work, we showed that $O(\log N)$ number of distinguishable photons on average for $N$ photon input maintains an equivalent complexity to the ideal boson sampling case.
Specifically, we established a complexity-theoretical reduction from the problem of estimating the output probabilities of ideal boson sampling, to the problem of estimating the output probabilities of noisy boson sampling with a logarithmic number of distinguishable photons on average. 
This implies that noisy boson sampling with $O(\log N)$ number of distinguishable photons preserves the classical intractability of the ideal boson sampling. 
We also extended our result to the partially distinguishable and lossy boson sampling and found that $O(\log N)$ number of photon loss and $O(\log N)$ number of distinguishable photons upholds the classical intractability of the ideal boson sampling.

Let us first remark on the barrier to using our main technique in the photon loss case, i.e., noisy boson sampling subject to photon loss.   
Our main strategy is that, given access to the noisy output probability for different noise rates, we first conduct polynomial interpolation to infer the noisy output probability in terms of the noise rate, and then obtain the ideal output probability by setting the noise rate to zero. 
Therefore, the requirement of the noisy output probability to use this approach is that it becomes the ideal output probability when the noise rate goes to zero. 
For the partial distinguishability noise, the noisy output probability $q(x, X)$ in Eq.~\eqref{noisyoutputprobability} satisfies this condition, because it exhibits the ideal output probability at $x = 1$.
However, for the photon loss case with transmission rate $\eta$, the noisy output probability to obtain the first $n$ modes outcome for $N \ge n$ input photons (without post-selection) is~\cite{oszmaniec2018classical} 
\begin{equation}
q(\eta, X) = \eta^n(1-\eta)^{N-n}\sum_{\substack{K \subseteq [N]\\|K|=n}} |\Per(X_{K})|^2 ,
\end{equation}
which does not converge to the ideal output probability by only adjusting the transmission rate $\eta$, except for the output probability without any loss (i.e., $N = n$, which is the ideal output probability itself). 
Moreover, when we post-select $n$ photon outcomes as in Ref.~\cite{aaronson2016bosonsampling}, the noisy output probability loses its dependence on $\eta$, and thus we cannot take our approach for this case either. 
Therefore, our main strategy does not work for the output probability of lossy boson sampling.

We also remark that in contrast to the photon loss case, we cannot employ the `post-selection trick' for the partial distinguishability noise, i.e., post-selecting noise-free event. 
For the photon loss case, we can post-select no-loss outcomes by repeating the sampling process until the output photon number of the outcome is equal to the input photon number. 
As described in Sec.~\ref{section: generalization}, the probability of post-selecting no-loss outcome for $k'$ number of lost photons on average for input $N$ photons is $(1-\frac{k'}{N})^N$. 
Hence, via post-selection, one can directly deduce that lossy boson sampling with at most $k' = O(\log N)$ number of lost photons on average remains its complexity to the original boson sampling case. 
On the other hand, in the case of partial distinguishability noise, we cannot post-select the noise-free outcome, i.e., the outcome without any distinguishable photons. 
Instead, the output probability of partially distinguishable boson sampling is represented as a summation of output probabilities corresponding to a fixed number of distinguishable photons from $0$ to $N$, as depicted in Eq.~\eqref{noisyoutputprobability}.
Therefore, in contrast to the photon loss case, we cannot directly obtain the classical hardness result for $O(\log N)$ number of noisy photons via post-selection in the partial distinguishability case.

We now conclude with a few open problems.

1. A crucial open problem that needs to be addressed is the extension of our result for a larger rate of noise.
It is reasonable to expect that the experimental implementation of boson sampling generates at least $\Theta(N)$ noisy photons over $N$ input photons.
That is, a constant fraction of input photons suffer from noise, and for our case, this corresponds to $x = \Theta(1)$.
In this regime, the low-degree approximation of the noisy output probability in Eq.~\eqref{noisyoutputprobability}, which was our main strategy for the proof of Theorem~\ref{logpdhardness} in Sec.~\ref{Section:reduction}, does not hold anymore. 
However, as long as we use the polynomial interpolation to infer the noiseless output probability, it is crucial to decrease the degree of the polynomial corresponding to the noisy output probability to at most $O(\log (\text{poly}(N, \epsilon_0^{-1}, \delta_0^{-1})))$ degree as we have described previously. 
Hence, a more advanced technique beyond our low-degree approximation has to be developed to extend our result to larger noise rates; we leave this problem as an open question.

2. Another open problem is investigating the classical simulation hardness of partially distinguishable Gaussian boson sampling. 
Gaussian boson sampling is a variant of boson sampling with Gaussian state inputs instead of Fock states, which is more suitable for experimental setups while having similar computational complexity~\cite{hamilton2017gaussian, kruse2019detailed, deshpande2022quantum, grier2022complexity}.
Accordingly, recent experiments have been held on Gaussian boson sampling for the experimental demonstration of quantum computational advantage~\cite{zhong2020quantum, zhong2021phase, madsen2022quantum, deng2023gaussian}.
Therefore, to achieve quantum computational advantage via near-term Gaussian boson sampling experiments, it is crucial to verify that noisy Gaussian boson sampling is classically hard. 
One way to do this is to find that the noisy Gaussian boson sampling is as hard as the ideal Gaussian boson sampling case, similar to Ref.~\cite{aaronson2016bosonsampling} and our main result in this work. 
For the partial distinguishability noise, Ref.~\cite{shi2022effect} identified the noisy output probability of Gaussian boson sampling when partial distinguishability noise is applied. 
However, since our proof technique cannot be directly applied to the noisy output probability in~\cite{shi2022effect}, other proof techniques should be developed to deal with it.

3. One can also consider more realistic noise models, i.e., beyond the uniform distinguishability noise parameterized by a single parameter $x$ as in~\cite{tichy2015sampling}.  
One example is non-uniform distinguishability noise parameterized by $x_1,\dots,x_N$ for each of $N$ input photons. 
In this case, since the output probability cannot be represented as a polynomial of a single parameter (as $x$ in Eq.~\eqref{noisyoutputprobability}), our technique cannot be directly applied. 
One suboptimal approach we can take is to arbitrarily increase each $x_i$ value for all $i \in [N]$, to the maximum value among $x_1,\dots,x_N$.
Then, by defining the maximum value as a parameter $x$, we can use the same process as in the main result and obtain the same result.
Another approach is to employ the results in Refs.~\cite{brod2020classical, van2024efficient}, which dealt with the classical simulability of noisy boson sampling subject to non-uniform physical noises including loss and distinguishability. 
Although those results are for classical simulation algorithms, extending their techniques to the hardness arguments of noisy boson sampling with non-uniform types of noise would be an interesting open problem. 


\acknowledgements

B.G. was supported by the education and training program of the
Quantum Information Research Support Center, funded through the National
Research Foundation of Korea (NRF) by the Ministry of Science and ICT (MSIT) of
the Korean government (No.2021M3H3A1036573).
B.G. and H.J. were supported by the Korean government (Ministry of Science and ICT~(MSIT)),
the NRF grants funded by the Korea government~(MSIT)~(Nos.~RS-2024-00413957,~RS-2024-00438415, and NRF-2023R1A2C1006115),~and the Institute of Information \& Communications Technology Planning \& Evaluation (IITP) grant funded by the Korea government (MSIT) (IITP-2025-RS-2020-II201606 and  IITP-2025-RS-2024-00437191).
C.O. was supported by the NRF Grants (No. RS-2024-00431768 and No. RS-2025-00515456) funded by the Korean government (MSIT) and IITP grants funded by the Korea government (MSIT) (No. IITP-2025-RS-2025-02283189 and  IITP-2025-RS-2025-02263264).

\bibliographystyle{unsrt}
\bibliography{reference}

\appendix

\cor{
\section{Comparison to previous results}\label{appendix: section: related works}

In this appendix, we provide a detailed overview of the existing literature on the complexity of noisy boson sampling and compare the presented work to these findings to further clarify our contribution.

\subsection{Our main result}

We first summarize our main results, emphasizing our key assumption, main technique, and noise model. 
In the present work, we consider partial distinguishability noise, one of the major noise sources that hinders experimental demonstration of quantum computational advantage. 
We adopt the uniform indistinguishability rate $x$ as a noise model, which indicates that each photon becomes distinguishable with probability $1 - x$~\cite{tichy2015sampling, renema2018efficient, moylett2019classically}.
Our key ingredient for hardness result is that, for a fixed $x^{*}$ of the partially distinguishable boson sampling, we consider the indistinguishability rate $x \in [0,x^{*}]$ as an input for that noisy boson sampler, and derive the hardness result for that sampling problem. 
Then, under the plausible assumption that classical simulability does not degrade with decreasing $x$, this implies the classical hardness of simulating partially distinguishable boson sampling with a fixed indistinguishability rate $x^{*}$.

To summarize our result, we show that noisy $N$-single photon boson sampling that has $(1-x^{*})N = O(\log N)$ number of distinguishable photons on average maintains the classical intractability of the ideal boson sampling.
Our main strategy is the \textit{low-degree polynomial approximation} of noisy output probability, which is also a key technique for classical simulation studies for noisy boson sampling. 
We also generalize our result to the photon loss noise, and show that noisy boson sampling that unavoidably has on average $O(\log N)$ number of lost photons and $O(\log N)$ number of distinguishable photons maintains the classical intractability of the ideal boson sampling.  


\subsection{Classical hardness results for noisy boson sampling}

Hereafter, we compare our results to existing literature on the complexity of noisy boson sampling.  
One of the closely related works to our work is Ref.~\cite{shchesnovich2014sufficient}, which shows that partially distinguishable boson sampling, whose single-photon mode mismatch is given by $O(N^{-3/2})$, still preserves the classical intractability of the ideal boson sampling. 
Despite a slight difference in definition, the single-photon mode mismatch in Ref.~\cite{shchesnovich2014sufficient} represents a single-photon distinguishability, which is essentially the same quantity as $1-x$ in our work; this implies the classical hardness of partially distinguishable boson sampling for $O(N^{-1/2})$ number of distinguishable photons on average. 
Also, Ref.~\cite{aaronson2016bosonsampling} shows that lossy boson sampling, where at most a fixed $k = O(1)$ number of photons are lost over $N$ input photons, maintains the same complexity as the ideal boson sampling case.

The main strategy for these previous works is to show that the output probability estimation problem for the ideal case (i.e., $|\text{GPE}|_{\pm}^{2}$ problem in the main text) can also be solved by a noisy boson sampler under Stockmeyer's reduction, when the noise rate is below a certain threshold. 
In our work, we similarly adopt this strategy while improving the tolerable noise threshold for the classical intractability of noisy boson sampling; specifically, we increase the number of tolerable noisy photons for classical hardness to $O(\log N)$, now allowing the number of noisy photons to \textit{scale} with the system size.
Hence, by providing a significantly improved noise threshold for the classical hardness of boson sampling, our work paves the way toward realizing quantum computational advantage with noisy boson samplers.



}

\cor{

\subsection{Classical simulability results for partially distinguishable boson sampling}

Another closely related line of work is the classical simulability results of noisy boson sampling with partially distinguishable photons.
First of all, Ref.~\cite{kalai2014gaussian} finds that when the rate of Gaussian-type noise is sufficiently large, the output probability of that noisy boson sampling can be efficiently approximated by using a \textit{low-degree polynomial approximation}. 
Reference~\cite{renema2018efficient} generalizes this low-degree polynomial approximation technique in Ref.~\cite{kalai2014gaussian} to 
partially distinguishable boson sampling with uniform indistinguishability rate $x$, and proposes a classical algorithm that approximately simulates partially distinguishable boson sampling.
References~\cite{moylett2019classically, renema2018classical} further generalize to the combined noise model for loss and distinguishability, characterized by uniform transmission rate $\eta$ and uniform indistinguishability $x$, and similarly employing the low-degree polynomial approximation, they propose a classical algorithm to approximately simulate that noisy boson sampling. 
Reference~\cite{shchesnovich2019noise} also generalizes Ref.~\cite{kalai2014gaussian} by converting the Gaussian noise to realistic partial distinguishability noise, and suggests a noise threshold for efficient classical simulability of that noisy boson sampling. 
References~\cite{renema2020simulability, shi2022effect} extend the analysis for the Gaussian boson sampling and also propose an efficient classical simulation for sufficiently small indistinguishability rates. 
Reference~\cite{van2024efficient} generalizes these previous classical algorithms from uniform distinguishability noise to the non-uniform distinguishability cases.

These numerous simulability results suggest a noise threshold for the classical simulability of partially distinguishable boson sampling, providing a necessary condition for achieving quantum computational advantage. 
Here, the major technique they employed is a low-degree polynomial approximation, which can be achieved by truncation of high-order terms in the noisy output probability.
In our work, we employ a similar low-degree polynomial approximation strategy for noisy output probabilities, but using it to establish the classical hardness result for partially distinguishable boson sampling.
Our result provides a noise threshold for the classical intractability of partially distinguishable boson sampling, thereby complementing existing simulability results and ultimately offering a sufficient condition for the noise rate of boson sampling to achieve the quantum computational advantage. 

}

\cor{

\subsection{Classical simulability results for noisy boson sampling with photon loss}

We list here the classical simulability results for noisy boson sampling under photon loss.
Reference~\cite{rahimi2016sufficient} presents a classical simulation method based on quasi-probability distribution of continuous-variable quantum states, showing that boson sampling becomes classically simulable when the loss and dark count rate are sufficiently large. 
After that, Refs.~\cite{oszmaniec2018classical, garcia2019simulating} show that for uniform photon loss, the lossy boson sampling becomes efficiently simulable when $O(\sqrt{N})$ photons survive over $N$ input photons. 
Reference~\cite{qi2020regimes} reproduces this scaling behavior for the Gaussian boson sampling case, showing the classical simulability of lossy Gaussian boson sampling with uniform loss when $O(\sqrt{N})$ photons survive over $N$ input photons on average. 
Reference~\cite{brod2020classical} generalizes these previous analyses to the non-uniform loss case. 
For the non-asymptotic case, by using the effect of photon loss on Gaussian boson sampling, Refs.~\cite{villalonga2021efficient, bulmer2022boundary, oh2024classicalalgorithm} develop classical algorithms that can simulate the recent Gaussian boson sampling experiments~\cite{zhong2020quantum, zhong2021phase, madsen2022quantum, deng2023gaussian}.

Although not directly related to our work, these works illustrate how noise in experiments can significantly reduce the computational complexity of boson sampling, thereby hindering the realization of quantum computational advantage. 
Hence, these simulability results highlight the necessity for further classical hardness results of noisy boson sampling, i.e., improving the tolerable noise rates that maintain the computational complexity of boson sampling. 
}


\cor{

\subsection{Classical simulability and hardness results for shallow-depth boson sampling}

We now provide an overview of the shallow-depth boson sampling.
For the classical simulability results, Refs.~\cite{vidal2003efficient, garcia2019simulating} present classical algorithms for simulating $1$-d log-depth boson sampling based on MPS methods, for geometrically local circuit architectures. 
Reference~\cite{qi2022efficient} extends these arguments, showing that $1$-d local log-depth Gaussian boson sampling can be efficiently simulated. 
Classical algorithms for shallow-depth boson sampling with general-dimensional local circuit architectures have also been developed, including boson sampling in Refs.~\cite{deshpande2018dynamical, oh2022classical}, and Gaussian boson sampling in Refs.~\cite{oh2022classical, kolarovszki2023simulating}.
Reference~\cite{oh2024classical} proposes a classical algorithm for noisy constant-depth boson sampling with general types of noise, including partial distinguishability noise that we mainly consider.

Meanwhile, for the classical hardness results of shallow-depth boson sampling, Ref.~\cite{go2024exploring} proposes a log-depth circuit architecture with geometrically non-local gates, that can potentially be used for hardness results. 
Subsequently, Ref.~\cite{go2024computational} claims the average-case hardness for the log-depth circuit architecture proposed in Ref.~\cite{go2024exploring} up to a certain imprecision level, providing hardness evidence of shallow-depth boson sampling for that circuit architecture.

Indeed, analysis on shallow-depth boson sampling is crucial for achieving quantum computational advantage, because shallow-depth circuits can significantly reduce the effect of noise.
However, since the hardness results on noisy boson sampling (including ours) consider Haar-random linear optical circuits that require a deep circuit depth~\cite{clements2016optimal, russell2017direct}, they cannot be directly applied to those analyses of shallow-depth boson sampling. 
Hence, extending the existing hardness arguments for noisy boson sampling to shallow-depth circuit architectures would pave the way toward the ultimate demonstration of quantum computational advantage.

}

\section{Proof of Lemma~\ref{lemma:distancebytruncationissmall}}\label{proofoflemma}\label{proofoflemma}

In this appendix we prove Lemma~\ref{lemma:distancebytruncationissmall}, which states that for $X \sim \mathcal{N}(0,1)_{\mathbb{C}}^{N\times N}$ and $x \ge 1 - \frac{k_{\text{max}}}{N}$, the truncated output probability $q^{(l)}(x,X)$ in Eq.~\eqref{truncatedoutputprobability} is $\epsilon_1$-close to $q(x, X)$ in Eq.~\eqref{noisyoutputprobability}, for $\epsilon_1$ given by
\begin{equation}\label{epsilon1appendix}
    \epsilon_1 = \delta^{-1}N^{-\frac{c_{\text{max}}}{3}\left(\frac{c_l}{c_{\text{max}}} - 1 \right)^2 },
\end{equation}
over at least $1-\delta$ of $X \sim \mathcal{N}(0,1)_{\mathbb{C}}^{N\times N}$.

\begin{proof}[Proof of Lemma~\ref{lemma:distancebytruncationissmall}]

The error induced by truncation from $q(x,X)$ to $q^{(l)}(x,X)$ can be expressed as
\begin{equation}\label{errorterm}
q(x,X) - q^{(l)}(x, X) = \sum_{j=0}^{N-l-1}\binom{N}{j}x^{j}(1-x)^{N-j}  q_{j}(X) .
\end{equation}
Here, given $l > k$, the coefficients in Eq.~\eqref{errorterm} correspond to the tail of the binomial distribution. 
We first investigate the maximum size of these binomial tails in terms of $l$ and $k$. 

To examine the bound of binomial tails, we use the Chernoff-Hoeffding bound. 
Specifically, for $N$ number of independent Bernoulli random variables $X_1,\dots,X_N$ with each $X_i \in \{0,1\}$, the sum of these variables $S_N=\sum_{i=1}^N X_i$ satisfies the following inequality~\cite{dubhashi2009concentration}: 
\begin{align}\label{chernoffhoeffding}
    \Pr[S_N\geq (1+\xi)\mathbb{E}[S_N]]&\leq \exp\left(-\frac{\xi^2}{3}\mathbb{E}[S_N]\right).
\end{align}
Using this bound we can find the upper bound of the following tail terms:
\begin{align}\label{eieieis}
    \sum_{j=0}^{N-l-1} \binom{N}{j}x^{j}(1-x)^{N-j}
    =\sum_{j=l+1}^N \binom{N}{j}(1-x)^{j} x^{N-j}.
\end{align}
Specifically, by considering the success probability of the Bernoulli trial (i.e., probability that $X_i = 1$) as $1 - x$, the right-hand side of Eq.~\eqref{eieieis} corresponds to $\Pr[S_N \geq l+1]$. 
Hence, by using $\mathbb{E}[S_N] = (1-x)N = k$ and setting $\xi=(l+1)/\mathbb{E}[S_N]-1=(l+1)/k-1$, we have 
\begin{align}
    &\sum_{j=0}^{N-l-1} \binom{N}{j}x^{j}(1-x)^{N-j} \nonumber \\
    &=\Pr[S_N\geq l+1] \\
    &\leq \exp\left(-\frac{k}{3}\left(\frac{l+1}{k}-1\right)^2\right) \\ 
    &\leq \exp\left(-\frac{k}{3}\left(\frac{l}{k}-1\right)^2\right) \\
    &\leq \exp\left(-\frac{k_{\text{max}}}{3}\left(\frac{l}{k_{\text{max}}}-1\right)^2\right) \\
    &= N^{-\frac{c_{\text{max}}}{3}\left(\frac{c_l}{c_{\text{max}}} - 1 \right)^2 } .
\end{align}

We now examine the size of $q_{j}(X)$ in Eq.~\eqref{errorterm}.
For any $j = 0,\dots,N$, the average value of $q_{j}(X)$ in Eq.~\eqref{fixedpdoutputprobability} over $X \sim \mathcal{N}(0,1)_{\mathbb{C}}^{N\times N}$ is
\begin{align}
    &\E_{X} \left[ q_{j}(X) \right] \nonumber \\
    &=  \binom{N}{j}^{-1}\sum_{\substack{I \subseteq [N]\\|I|=j}} \sum_{\substack{J \subseteq [N]\\|J|=j}} \E_{X} \left[ |\Per(X_{I,J})|^2 \Per(|X_{\bar{I},\bar{J}}|^2) \right] \\    
    &=  \binom{N}{j}^{-1}\sum_{\substack{I \subseteq [N]\\|I|=j}} \sum_{\substack{J \subseteq [N]\\|J|=j}} j! (N-j)! \label{averageofpartiallydistinguishablepermanent}\\
    &= N! ,
\end{align}
where the equality in Eq.~\eqref{averageofpartiallydistinguishablepermanent} comes from the fact that for any $I, J \subseteq [N]$ with $|I|=|J|=j$, $X_{I,J} \sim \mathcal{N}(0,1)_{\mathbb{C}}^{j\times j}$ and $X_{\bar{I},\bar{J}} \sim \mathcal{N}(0,1)_{\mathbb{C}}^{N-j\times N-j}$ are i.i.d. Gaussians independent each other.
Also, for $X \sim \mathcal{N}(0,1)_{\mathbb{C}}^{N\times N}$, one can obtain $\E_{X}[|\Per(X)|^2] = \E_{X}[\Per(|X|^2)] = N!$ as shown in Ref.~\cite{aaronson2011computational}. 
By the above equation and using the fact that $q(x, X) \ge q^{(l)}(x,X)$, one can find that
\begin{align}
    &\E_{X}\left[ |q(x, X) -q^{(l)}(x,X)| \right] \nonumber \\
    &=  \sum_{j=0}^{N-l-1}\binom{N}{j}x^{j}(1-x)^{N-j}  \E_{X} \left[ q_{j}(X) \right] \\
    &= N!\sum_{j=0}^{N-l-1}\binom{N}{j}x^{j}(1-x)^{N-j} .
\end{align}
Finally, by using Markov's inequality, given the size of $\epsilon_1$ as in Eq.~\eqref{epsilon1appendix}, observe that
\begin{align}
    &\Pr_{X}\left[|q(x, X) - q^{(l)}(x,X)| > \epsilon_1 N! \right] \nonumber  \\
    &\le \frac{1}{\epsilon_1 N!}\E_{X}\left[ |q(x, X) - q^{(l)}(x,X)| \right] \\
    & = \frac{1}{\epsilon_1}\sum_{j=0}^{N-l-1}\binom{N}{j}x^{j}(1-x)^{N-j} \\
    &\leq \delta ,
\end{align}
concluding the proof.
\end{proof}

\section{Specifying $c_{\text{max}}$ and $c_l$}\label{determiningcmaxandcl}

In this appendix, we investigate the appropriate size of the parameters $c_{\text{max}}$ and $c_l$, first to make $\epsilon_1 = \delta^{-1}N^{-\frac{c_{\text{max}}}{3}\left(\frac{c_l}{c_{\text{max}}} - 1 \right)^2 } \ll \epsilon_0N^{-c_{\text{max}}-c_l(1+\log\Delta^{-1})}$, while avoiding $\epsilon_0N^{-c_{\text{max}}-c_l(1+\log\Delta^{-1})}$ becomes exceedingly small. 
To do so, we aim to find a minimum $c_l$ that satisfies the first condition. 

More specifically, to satisfy the first condition, the following quantity should be small, to at most a small constant:
\begin{align}
&\frac{\epsilon_1}{\epsilon_0N^{-c_{\text{max}}-c_l(1+\log\Delta^{-1})}} \nonumber \\
&= \delta^{-1}\epsilon_{0}^{-1}N^{-\frac{c_{\text{max}}}{3}\left(\frac{c_l}{c_{\text{max}}} - 1 \right)^2 + c_{\text{max}} + c_l(1+\log\Delta^{-1})} \\
&= O\left( l\delta_{0}^{-1}\epsilon_{0}^{-1} N^{-\frac{c_{\text{max}}}{3}\left(\frac{c_l}{c_{\text{max}}} - 1 \right)^2 + c_{\text{max}} + c_l(1+\log\Delta^{-1})}\right)  \label{b3} \\
&= O\left( l N^{-\frac{c_{\text{max}}}{3}\left(\frac{c_l}{c_{\text{max}}} - 1 \right)^2 + c_{\text{max}} + c_l(1+\log\Delta^{-1})  +  \frac{\log\epsilon_0^{-1}\delta_0^{-1}}{\log N}}\right) , \label{b4}
\end{align}
where we used $\delta^{-1} = O(\delta_0^{-1}l)$ in Eq.~\eqref{b3}, given that $\delta_0 = 1 - (1-2\delta)^{l+1}$.

Our approach is to first to parameterize $c_l$ in terms of $c_{\text{max}}$ as $c_l = \alpha c_{\text{max}}$ for $\alpha > 1$.
Then the exponent in Eq.~\eqref{b4} can be expressed as a quadratic function of $\alpha$ as
\begin{align}
-\frac{c_{\text{max}}}{3}\left( \alpha^2 -(5+3\log\Delta^{-1}) \alpha - 2 - \frac{3\log\epsilon_0^{-1}\delta_0^{-1}}{c_{\text{max}}\log N}  \right)   .  
\end{align}
We denote $\alpha^*$ as a (positive) root of the above quadratic function, which is 
\begin{align}\label{b6}
    \alpha^* &= \frac{5+3\log\Delta^{-1}}{2} \nonumber \\
    &+ \sqrt{\left(\frac{5+3\log\Delta^{-1}}{2}\right)^2 + 2 + \frac{3\log\epsilon_0^{-1}\delta_0^{-1}}{c_{\text{max}}\log N}} .
\end{align}
To prevent the right-hand side of Eq.~\eqref{b4} increase with system size $N$, the exponent in Eq.~\eqref{b4} should be negative, and thus $\alpha$ should be larger than $\alpha^*$.
Here, we set $\alpha$ as  
\begin{align}\label{b7}
    \alpha = \lambda \frac{5+3\log\Delta^{-1}}{2} + \frac{3\log\epsilon_0^{-1}\delta_0^{-1}}{c_{\text{max}}\log N} 
\end{align}
for a constant $\lambda$. 
One can check that when the constant $\lambda$ is larger than a certain value, $\alpha > \alpha^*$ always holds, by using the following inequalities:
\begin{align}
&\frac{5+3\log\Delta^{-1}}{2} + \sqrt{\left(\frac{5+3\log\Delta^{-1}}{2}\right)^2 + 2 + \frac{3\log\epsilon_0^{-1}\delta_0^{-1}}{c_{\text{max}}\log N}} \\
&\le \frac{5+3\log\Delta^{-1}}{2} + \sqrt{\left(\frac{5+3\log\Delta^{-1}}{2}\right)^2 + 2}  + \frac{3\log\epsilon_0^{-1}\delta_0^{-1}}{c_{\text{max}}\log N} \\
&< \lambda \frac{5+3\log\Delta^{-1}}{2} + \frac{3\log\epsilon_0^{-1}\delta_0^{-1}}{c_{\text{max}}\log N} , \label{b9}
\end{align}
where the inequality in Eq.~\eqref{b9} holds for $\lambda \ge 2.149$ given that $\frac{5+3\log\Delta^{-1}}{2} > 2.5$ (as $\log\Delta^{-1} = \log\frac{c_{\text{max}}+c_{\text{min}}}{c_{\text{max}}-c_{\text{min}}} > 0$).
Since we aim to minimize $\alpha$ (to minimize $c_l$), we set $\alpha$ as given in Eq.~\eqref{b7} with $\lambda = 2.149$ hereafter. 

We now examine the scaling behavior of the right-hand side term in Eq.~\eqref{b4}, for $\alpha$ we have set. 
More specifically, we find that 
\begin{align}
&N^{-\frac{c_{\text{max}}}{3}\left( \alpha^2 -(5+3\log\Delta^{-1}) \alpha - 2 - \frac{3\log\epsilon_0^{-1}\delta_0^{-1}}{c_{\text{max}}\log N}  \right)} \\
&= N^{-\frac{c_{\text{max}}}{3}\left( (2\alpha^* - 5-3\log\Delta^{-1})(\alpha - \alpha^*) + (\alpha - \alpha^*)^2  \right)} \label{b11} \\
&\leq N^{-\frac{c_{\text{max}}}{3}\left( (2\alpha^* - 5-3\log\Delta^{-1})(\alpha - \alpha^*) \right)} \\
&=  N^{-\frac{2c_{\text{max}}}{3}(\alpha - \alpha^*)\sqrt{\left(\frac{5+3\log\Delta^{-1}}{2}\right)^2 + 2 + \frac{3\log\epsilon_0^{-1}\delta_0^{-1}}{c_{\text{max}}\log N}} } , \label{b13}
\end{align}
where we used Taylor expansion of the quadratic function in Eq.~\eqref{b11}.
Here, from the definition of $\alpha$ and $\alpha^*$, one can check that $\alpha - \alpha^* = \Omega(1 + \frac{\log\epsilon_0^{-1}\delta_0^{-1}}{\log N})$.
Then the right-hand side of Eq.~\eqref{b13} scales inverse-polynomially with $N$ and $\epsilon_0^{-1}\delta_0^{-1}$. 
Also, we have 
\begin{align}
l &= c_l\log N \\ 
&= \alpha c_{\text{max}}\log N \\
&= O(\log N + \log \epsilon_0^{-1}\delta_0^{-1}),
\end{align}
such that $l$ scales logarithmically with $N$ and $\epsilon_0^{-1}\delta_0^{-1}$.
Combining these results, the right-hand side term in Eq.~\eqref{b4} scales inverse-polynomially with $N$ and $\epsilon_0^{-1}\delta_0^{-1}$ up to logarithmic factor; thus, from Eq.~\eqref{b4} we have 
\begin{align}\label{b17}
\frac{\epsilon_1}{\epsilon_0N^{-c_{\text{max}}-c_l(1+\log\Delta^{-1})}} = O(\text{poly}(N,\epsilon_0^{-1}\delta_0^{-1})^{-1}).  
\end{align} 

Next, we parameterize $c_{\text{max}}$ in terms of $c_{\text{min}}$ as $c_{\text{max}} = \kappa c_{\text{min}}$ for a constant $\kappa$. 
Then, $\Delta = \frac{c_{\text{max}}-c_{\text{min}}}{c_{\text{max}}+c_{\text{min}}} = \frac{\kappa - 1}{\kappa + 1}$, and we have
\begin{align}
&\epsilon_0N^{-c_{\text{max}}-c_l(1+\log\Delta^{-1})} \nonumber \\
&= \epsilon_0N^{-(1+ (1+\log\Delta^{-1})\alpha)c_{\text{max}}} \\
&= \epsilon_0N^{-\left(1+ (1+\log\Delta^{-1})\left(\lambda \frac{5+3\log\Delta^{-1}}{2} + \frac{3\log\epsilon_0^{-1}\delta_0^{-1}}{c_{\text{max}}\log N}\right) \right) c_{\text{max}}} \\
&= \epsilon_0^{4 +3\log\frac{\kappa + 1}{\kappa - 1}}\delta_0^{3+3\log\frac{\kappa + 1}{\kappa - 1}}  \nonumber \\
&\times N^{-\kappa\left(1+ \frac{\lambda}{2}\left(1+\log\frac{\kappa + 1}{\kappa - 1}\right)\left(5+3\log\frac{\kappa + 1}{\kappa - 1}\right) \right) c_{\text{min}}} . \label{b20}
\end{align}
Since we have freedom to choose the constant $\kappa$, we choose $\kappa = 2.108$, which maximizes the exponent of $N$ in Eq.~\eqref{b20} to $-39.35 c_{\text{min}}$.

To sum up, by setting $c_{\text{max}}$ and $c_l$ as 
\begin{align}
    &c_{\text{max}} = \kappa c_{\text{min}} , \label{cmax} \\
    &c_l =  \kappa \lambda \frac{5+3\log\frac{\kappa + 1}{\kappa - 1}}{2} c_{\text{min}} + \frac{3\log\epsilon_0^{-1}\delta_0^{-1}}{\log N} 
\end{align}
with $\kappa = 2.108$ and $\lambda = 2.149$, we have 
\begin{align}
     \epsilon_0N^{-c_{\text{max}}-c_l(1+\log\Delta^{-1})} - \epsilon_1 = O\left( \epsilon_0N^{-c_{\text{max}}-c_l(1+\log\Delta^{-1})} \right) , \label{b23}
\end{align}
and the exponents in the right-hand side of Eq.~\eqref{b23} can be specified as
\begin{align}\label{b25}
\epsilon_0N^{-c_{\text{max}}-c_l(1+\log\Delta^{-1})} = \epsilon_0^{7.094}\delta_0^{6.094} N^{-39.35 c_{\text{min}}} .
\end{align}

Note that one can further optimize Eq.~\eqref{b25}, by choosing $\alpha$ in Eq.~\eqref{b7} more closer to $\alpha^*$ in Eq.~\eqref{b6} while satisfying Eq.~\eqref{b17}.
But, for our purpose, this result is sufficient.

\section{Proof of Corollary~\ref{sublogpdhardness}: Reducing $|\text{GPE}|_{\pm}^{2}$ to $|\text{PGPE}|_{\pm}^{2}$ for $k_{\text{min}} = o(\log N)$}\label{appendix:proofofcorollary}

To prove Corollary~\ref{sublogpdhardness}, we reproduce Lemma~\ref{lemma:distancebytruncationissmall} for sub logarithmically scaling $k_{\text{min}} = o(\log N)$.
Similarly to our main result in Sec.~\ref{Section:reduction}, let $x_{\text{min}}$ be a minimum value of $x$ we set, such that $x_{\min} = 1 - \frac{k_{\text{max}}}{N}$ with $k_{\text{max}} = o(\log N)$.
Specifically, $k_{\text{max}}$ is larger than $k_{\text{min}}$, and also proportional to $k_{\text{min}}$ such that $\frac{k_{\text{max}}}{k_{\text{min}}} = \Theta(1)$.

We first examine the size of the binomial term in the noisy output probability $q(x,X)$ in Eq.~\eqref{noisyoutputprobability}, for $j = N - l$ with $l$ larger than $k_{\text{max}}$. 
Specifically, for any input indistinguishability rate $x = 1-\frac{k}{N}$ with $k \in [k_{\text{min}}, k_{\text{max}}]$, the binomial term corresponding to $j = N-l$ is bounded by
\begin{align}
    &\binom{N}{N-l}x^{N-l}(1-x)^{l} \nonumber \\
    &= \binom{N}{N-l}\left(1-\frac{k}{N}\right)^{N-l}\left(\frac{k}{N}\right)^{l} \\
    &< \frac{1}{\sqrt{l}} \frac{N^N}{l^l(N-l)^{N-1}}\left(1-\frac{k}{N}\right)^{N-l}\left(\frac{k}{N}\right)^{l} \label{asxxz}\\
    &= \frac{1}{\sqrt{l}} \left(\frac{N-k}{N-l}\right)^{N-l}\left(\frac{k}{l}\right)^{l}\\
    &< \frac{1}{\sqrt{l}} e^{-l\log(\frac{l}{k}) + l - k } \label{14} \\
    &< e^{-l\log(\frac{l}{k}) + l - k } \\
    &\le  e^{-l\log(\frac{l}{k_{\text{max}}}) + l - k_{\text{max}} }, \label{15} 
\end{align}
where we used Stirling's inequality in Eq.~\eqref{asxxz}, and used the following relation in Eq.~\eqref{14}:
\begin{align}
    &(N-l)\log(\frac{N-k}{N-l}) \nonumber \\
    &= (N-l)\left(\log(1-\frac{k}{N})-\log(1-\frac{l}{N})\right) \\ 
    &= (N-l)\sum_{m=1}^{\infty} \frac{1}{m!}\frac{l^m - k^m}{N^m} \label{18}\\
    &= l - k + \sum_{m=1}^{\infty} \frac{l^{m+1}-k^{m+1}-(m+1)l(l^{m}-k^m)}{(m+1)!N^m} \\
    &\le l - k + \sum_{m=1}^{\infty} \frac{-l^{m+1} + 2lk^m - k^{m+1}}{(m+1)!N^m} \label{20} \\ 
    &= l - k - \sum_{m=1}^{\infty} \frac{k^{m+1}}{(m+1)!N^m}\left(\left(\frac{l}{k}\right)^{m+1} - 2\left(\frac{l}{k}\right) + 1 \right) \\ 
    &< l - k , \label{22}
\end{align}
where we used Taylor expansion in Eq.~\eqref{18}, and used $m \ge 1$ and $l > k$ in Eq.~\eqref{20} and Eq.~\eqref{22}.

By the property of the binomial distribution, the binomial term $\binom{N}{j}x^{j}(1-x)^{N-j}$ monotonically decreases with decreasing $j$, as long as $j < Nx$.
Hence, for any $x = 1 - \frac{k}{N}$, all the binomial terms corresponding to $j < N - l$ are smaller than the binomial term corresponding to $j = N - l$, given that $l$ is larger than $k_{\text{max}}$.
Accordingly, the summation of the binomial terms from $j=0$ to $j=N-l-1$ is upper-bounded as
\begin{align}
    &\sum_{j=0}^{N-l-1}\binom{N}{j}x^{j}(1-x)^{N-j} \nonumber \\
    &\le (N-l)\cdot\binom{N}{N-l}x^{N-l}(1-x)^{l} \\
    &< Ne^{-l\log(\frac{l}{k_{\text{max}}}) + l - k_{\text{max}} } .\label{upperboundofsumofbinomials}
\end{align}

Then, similarly using the proof of Lemma~\ref{lemma:distancebytruncationissmall}, $q(x, X)$ and $q^{(l)}(x,X)$ are $\epsilon_1N!$-close over at least $1 - \delta$ of X, for $\epsilon_1 = \delta^{-1}Ne^{-l\log(\frac{l}{k_{\text{max}}}) + l - k_{\text{max}} }$, i.e., 
\begin{align}
\begin{split}
    \Pr_{X}\left[|q(x, X) - q^{(l)}(x,X)| > \epsilon_1 N! \right] < \delta .
\end{split}
\end{align}
By using this property and following the proof in Sec.~\ref{proofoftheorem1}, the proof of Corollary~\ref{sublogpdhardness} is now straightforward. 

\begin{proof}[Proof of Corollary~\ref{sublogpdhardness}]
The proof is essentially the same as the proof of Theorem~\ref{logpdhardness} except for some minor changes in the size of the parameters. 
Let $\mathcal{O}$ be the oracle that on input $x$ and $X \sim \mathcal{N}(0,1)_{\mathbb{C}}^{N\times N}$, estimates $q(x, X)$ within $\epsilon N!$ over $1-\delta$ of $X$, such that 
\begin{equation}
    \Pr_{X}\left[|\mathcal{O}(x,X) - q(x, X)| > \epsilon N! \right] < \delta.
\end{equation}
For $\epsilon' = \epsilon + \epsilon_1$, using triangular inequality, 
\begin{align}
    &\Pr_{X}\left[|\mathcal{O}(x,X) - q^{(l)}(x,X)| > \epsilon'N! \right]  \nonumber\\ 
    &\le \Pr_{X}\left[|\mathcal{O}(x,X) - q(x, X)| > \epsilon N! \right] \nonumber \\
    &+ \Pr_{X}\left[|q(x, X) - q^{(l)}(x,X)| > \epsilon_1 N! \right] \\
    &< 2\delta.
\end{align}

Let $\epsilon'' = \epsilon'e^{k_{\text{max}}}$ such that $\epsilon'x^{-N+l} \le \epsilon'x_{\text{min}}^{-N+l} = \epsilon'e^{k_{\text{max}} - \tilde{\mathcal{O}}(N^{-1})} \le \epsilon''$. Then, for $f_{X}^{(l)}(x) = x^{-N + l}q^{(l)}(x,X)$, 
\begin{align}
    &\Pr_{X}\left[|\mathcal{O}(x,X)x^{-N+l} - f_{X}^{(l)}(x)| > \epsilon''N! \right] \nonumber \\
    &\le \Pr_{X}\left[|\mathcal{O}(x,X)x^{-N+l} - f_{X}^{(l)}(x)| > \epsilon'x^{-N+l}N! \right] \\
    &= \Pr_{X}\left[|\mathcal{O}(x,X) - q^{(l)}(x,X)| > \epsilon'N! \right] \\
    &< 2\delta .\label{estimatelowdegreepolynomialforsublogpd}
\end{align}
As Eq.~\eqref{estimatelowdegreepolynomialforsublogpd} is the same as Eq.~\eqref{estimatelowdegreepolynomial} in the proof of Theorem~\ref{logpdhardness}, we can repeat all the steps identically to the proof of Theorem~\ref{logpdhardness} and obtain the estimation value $|\Per(X)|^2$ within $\epsilon''e^{l(1+\log\Delta^{-1})}N!$ with probability at least $1 - (1-2\delta)^{l+1}$ (where $\Delta = \frac{k_{\text{max}} - k_{\text{min}}}{k_{\text{max}} + k_{\text{min}}} = \Theta(1)$).
Then, the conditions for the error parameters $\epsilon$ and $\delta$ are $\delta_0 \ge 1 - (1-2\delta)^{l+1}$ and $\epsilon_0 \ge \epsilon''e^{l(1+\log\Delta^{-1})} = (\epsilon + \epsilon_1)e^{k_{\text{max}}}e^{l(1+\log\Delta^{-1})}$ (in other words, $\epsilon \le \epsilon_0e^{-k_{\text{max}} - l(1+\log\Delta^{-1})} - \epsilon_1$. 
Hence, similarly as before, we set $\delta_0 = 1 - (1-2\delta)^{l+1}$, such that $\delta = O(l^{-1}\delta_0)$.

Here, note that the approximation error $\epsilon$ in the $|\text{PGPE}|_{\pm}^{2}$ problem is positive.
Therefore, as we have previously discussed in Sec.~\ref{proofoftheorem1}, $\epsilon_1$ should satisfy $\epsilon_1 \ll \epsilon_0e^{-k_{\text{max}}-l(1+\log\Delta^{-1})}$ such that $\epsilon_0e^{-k_{\text{max}} - l(1+\log\Delta^{-1})} - \epsilon_1 = O(\epsilon_0e^{-k_{\text{max}} - l(1+\log\Delta^{-1})})$.
This means that the following quantity should be small enough: 
\begin{align}
    &\frac{\epsilon_1}{\epsilon_0e^{-k_{\text{max}}-l(1+\log\Delta^{-1})}} \nonumber \\
    &= \epsilon_0^{-1}\delta^{-1}Ne^{-l\log(\frac{l}{k_{\text{max}}}) + l(2+\log\Delta^{-1}) } \\
    &= O\left( l e^{-l\log(\frac{l}{k_{\text{max}}}) + l(2+\log\Delta^{-1}) +\log N + \log \epsilon_0^{-1}\delta_0^{-1}} \right), \label{c22}
\end{align}
where we used $\delta^{-1} = O(l\delta^{-1})$ from $\delta_0 = 1 - (1-2\delta)^{l+1}$ in Eq.~\eqref{c22}. 
To make the right-hand side of Eq.~\eqref{c22} small, we set logarithmically scaling $l$ with $N$ and $\epsilon_0^{-1}\delta_0^{-1}$, such that  
\begin{align}\label{c23}
    l = a_0\log N + a_1\log\epsilon_0^{-1}\delta_0^{-1}, 
\end{align}
for constants $a_0, a_1 > 0$. 
Note that the right-hand side of Eq.~\eqref{c22} scales $o(\text{poly}(N, \epsilon_0^{-1}\delta_0^{-1})^{-1})$ even for arbitrarily small constants $a_0$ and $a_1$, because the leading term scales $e^{-l\log l}$. 
Hence, for $l$ given in Eq.~\eqref{c23} with any $a_0, a_1 > 0$, $\epsilon_0e^{-k_{\text{max}} - l(1+\log\Delta^{-1})} - \epsilon_1 = O(\epsilon_0e^{-k_{\text{max}} - l(1+\log\Delta^{-1})})$.

Therefore, using $l$ in Eq.~\eqref{c23}, $\delta = O(l^{-1}\delta_0) = O(\delta_0(\log N + \log\epsilon_0^{-1}\delta_0^{-1})^{-1})$.
Also, the condition for $\epsilon$ reduces to 
\begin{align}
\epsilon &\le O\left(\epsilon_0e^{-k_{\text{max}} - l(1+\log\Delta^{-1})} \right) \\
&= O\left(\frac{\epsilon_0^{1+a_1(1+\log\Delta^{-1})}\delta_0^{a_1(1+\log\Delta^{-1})}}{N^{a_0(1+\log\Delta^{-1})}}e^{-k_{\text{max}}} \right) . \label{c25}
\end{align}
Because $a_0$ and $a_1$ can be made arbitrarily small constant, setting $\epsilon = O(\epsilon_0^{\alpha+1}\delta_0^{\alpha}N^{-\alpha})$ for any constant $\alpha >0$ is sufficient to satisfy the condition in Eq.~\eqref{c25}. 
This completes the proof.

\end{proof}

\end{document}